\theoremstyle{definition}
\newcommand*{\rv}{} %first reviewer
\newcommand*{\gv}{} %second reviewer
\newcommand*{\gvc}{}
\newcommand\BibTeX{{\rmfamily B\kern-.05em \textsc{i\kern-.025em b}\kern-.08em
T\kern-.1667em\lower.7ex\hbox{E}\kern-.125emX}}
\begin{document}

\title{Cellular Topology Optimization on Differentiable Voronoi Diagrams}

\author[1]{Fan Feng}
\author[1]{Shiying Xiong}
\author[1]{Ziyue Liu}
\author[1]{Zangyueyang Xian}
\author[2]{Yuqing Zhou}
\author[3]{Hiroki Kobayashi}
\author[3]{Atsushi Kawamoto}
\author[3]{Tsuyoshi Nomura*}
\author[1]{Bo Zhu}

\address[1]{\orgdiv{Computer Science Department}, \orgname{Dartmouth College}, \orgaddress{\state{New Hampshire}, \country{United States}}}
\address[2]{ \orgname{Toyota Research Institute of North America}, \orgaddress{\state{Michigan}, \country{United States}}}
\address[3]{\orgname{Toyota Central R \& D Labs}, \orgaddress{\state{Nagakute}, \country{Japan}}}

\corres{*Tsuyoshi Nomura, Koraku Mori Building 10F, 
1 - 4 - 14 Koraku, Bunkyo-ku, Tokyo 112-0004, Japan. 
\email{nomu2@mosk.tytlabs.co.jp}}

%\presentaddress{Present address}
%Koraku Mori Building 10F,\\
%1 - 4 - 14 Koraku, Bunkyo-ku, Tokyo 112-0004, Japan\\

\abstract[Abstract]{
Cellular structures manifest their outstanding mechanical properties in many biological systems. One key challenge for designing and optimizing these geometrically complicated structures lies in devising an effective geometric representation to characterize the system's spatially varying cellular evolution driven by objective sensitivities. A conventional discrete cellular structure, e.g., a Voronoi diagram, whose representation relies on discrete Voronoi cells and faces, lacks its differentiability to facilitate large-scale, gradient-based topology optimizations. We propose a topology optimization algorithm based on a differentiable and generalized Voronoi representation that can evolve the cellular structure as a continuous field. The central piece of our method is a hybrid particle-grid representation to encode the previously discrete Voronoi diagram into a continuous density field defined in a Euclidean space. Based on this differentiable representation, we further extend it to tackle anisotropic cells, free boundaries, and 
functionally-graded cellular structures.
Our differentiable Voronoi diagram enables the integration of an effective cellular representation into the state-of-the-art topology optimization pipelines, which defines a novel design space for cellular structures to explore design options effectively that were impractical for previous approaches.
We showcase the efficacy of our approach by optimizing cellular structures with up to thousands of anisotropic cells, including femur bone and Odonata wing.
}

\keywords{Topology optimization; differentiable Voronoi diagram; cellular structure}

% \jnlcitation{\cname{%
% \author{F. Feng}, 
% \author{S. Xiong}, 
% \author{Z. Liu}, 
% \author{Y. Zhou},
% \author{H. Kobayashi},
% \author{A. Kawamoto},
% \author{T. Nomura}, and 
% \author{B. Zhu}} (\cyear{2022}), 
% \ctitle{Cellular Topology Optimization on Differentiable Voronoi Diagrams}, \cjournal{IJNME}, \cvol{}.}

\maketitle

\section{Introduction}
% Need to add up till around 50 references
\label{sec:intro} 
% Why important to study Voronoi representation for Topology Optimization
Cellular and foam structures are ubiquitous in nature. Examples range from honeycombs \cite{qi2021advanced}, bone interiors \cite{gautam2021nondestructive}, and insect wings \cite{jongerius2010structural}, to foam bubbles \cite{kraynik2006structure}, all of which exhibit a wide range of material properties and functionalities that cannot be realized with traditional engineering configurations. Developing mechanistic understanding and numerical optimization of cellular structures with complex geometric patterns have received extensive attention in both scientific and engineering communities \cite{nazir2019state}. 
%% examples
For example, experiments, simulations, and data analysis have been carried out to study the insect wing morphology \cite{salcedo2019computational}, leaf vein topology \cite{ma2021topology}, and sponge cake formation \cite{bousquieres2017functional}, for the sake of uncovering the structural principles and developmental mechanics underpinning these biological systems.
%% problems with the current methods
However, most of the existing studies and cellular structural designs are based on trial-and-error experiments \cite{bock2010generalized}, stochastic sampling \cite{kou2012microstructural,gostick2013random}, and procedural methods \cite{martinez2018polyhedral,lei2020parametric}, which heavily relied on the materials' homogenized properties and human engineers' prior expertise. First-principle methods guided by mathematically rigorous sensitivities and numerically efficient optimization frameworks have remained an unexplored field due to three interleaving challenges in tackling cellular structures' differentiation, generalization, and discretization. 

%% representation
First, devising a differentiable representation for complex cellular structures is challenging. To the best of our knowledge, no current numerical approach can differentiate and optimize the topological evolution of a complex cellular structure on a single-cell level, i.e., to \textit{co-optimize the size, shape, and topology} of each local cell nested in the design domain by following continuous sensitivities of a specific design objective. Traditional cellular representations, 
%be they simplicial meshes, interaction networks, smoothed particles \citep{gingold1977smoothed} \bo{citations needed} etc., 
with Voronoi diagrams \citep{voronoi1908nouvelles} as a particular example, represent the geometry and topology of a cellular pattern on a discrete mesh. 
%\bo{check how a traditional Voronoi diagram is stored and add citations.} 
By defining a distance field based on a set of site points and extracting the level set with an equal distance to the two closest points, a Voronoi diagram partitions a domain into a set of discrete cells, with each cell represented by a set of faces, which are borders between incident cells, and vertices, which are intersections of incident faces. On the simulation side, numerical solvers discretized on a Voronoi diagram, or its many variations (e.g., Power diagrams \cite{aurenhammer1987power}), have been used widely to facilitate adaptive simulations of large-scale fluid \cite{springel2010moving, ji2019new}, solid \cite{busto2020high}, and cosmology \cite{springel2010pur} systems in scientific computing. On the geometry processing side, discrete mesh generation algorithms for Voronoi diagrams, or its dual, Delauney tessellations, have been integrated into the state-of-the-art computational geometry libraries (e.g., CGAL \cite{cgal:k-vda2-22a}) to render support for various computational geometry applications. Recently, a new line of efforts has been devoted to developing parallel algorithms to generate large-scale Voronoi diagrams on modern GPUs \cite{ray2018meshless,liu2020parallel}. Despite the rapid advent 
of forward simulation and mesh generation, inverse procedures to differentiate a Voronoi diagram, i.e., to calculate the sensitivity of a Voronoi diagram's geometric shape and topological pattern with respect to its design variables, which are essential to accommodate the various structural design and optimization applications, has remained unexplored. In the state of the art, most of the computational design of Voronoi structures in a specific physical context relies on non-gradient or procedural algorithms for the design parameter exploration. 
%With this saying, most of the essential meshing operations such as mesh tracking or topological changes all require discrete operations on the original representation. It hinders the possibilities of gradient-based schemes for designs with various objectives. Under such restriction, however, researchers have been attempting topology optimization of Voronoi structures. %
%\bo{Todo: Fan's writting. Needs some shortening.}
Instances include \cite{lu2014build}, which optimizes the Voronoi structure by distributing the site points heuristically according to the structure's stress, \cite{oncel2019generation}, which initializes Voronoi site points inside high-density boundaries according to standard topology optimization results, and \cite{cucinotta2019stress}, which iteratively deletes and adds site points to obtain a desired stress distribution on the surface. None of the above methods optimizes a Voronoi structure based on continuous sensitivity analysis. 
%and searching the design space of thickness of the walls and the number of the Voronoi centroids discretely. \citet{oncel2019generation} initialize effective Voronoi seeds inside high-density boundaries generated by the TO process. \citet{cucinotta2019stress} iteratively deletes and adds control points in order to achieve desired stress on a surface directly. None of the above methods optimizes a Voronoi structure based on continuous sensitivity analysis. 
%Regarding the bloom of gradient base design or optimization in recent few decades following SIMP \citep{sigmund200199}, making Voronoi Partition (meshing) differentiable opens tremendous possibility for foam-like structure design.
An algorithmic strategy to probe the optimal solution, by structuring the design space or enforcing strong heuristics, becomes an imperative necessity to establish an effective numerical optimization paradigm for cellular structures. 

% generalization
The second challenge for cellular structural optimization lies in the representation's generalizability. Most of the cellular patterns in nature do not rigorously satisfy the Voronoi definition. The cellular elements on these materials are \textit{\gv{geometrically} anisotropic}, \textit{spatially adaptive}, 
%mixed with other volumetric features, 
and \textit{with a free boundary}. 
% \bo{The following part should center around these three bullets. Fan, can you re-organize them?}
%% (1) anisotropic
First, cellular structures in nature are usually anisotropic, unlike a traditional Voronoi diagram whose cells exhibit a uniform distance metric in all directions. For example, the vein network of an insect wing partitions a thin shell into regions featured by a broad range of aspect ratios, encompassing cells that are small, large, thin, or slender \cite{hoffmann2018simple}. It has also been shown in the literature that anisotropic cellular structures can lead to lower structural compliance than isotropic ones \citep{ying2018anisotropic}. Representing an ensemble of anisotropic cellular features within a unified numerical framework requires an expressive geometric data representation that can characterize cellular patterns beyond a standard Voronoi diagram. 
%% (2) adaptive
Second, spatial adaptivity is essential for a cellular structure to obtain outstanding mechanical performance. 
%There has been an abundance of literature devoting to control the Voronoi cells' adaptivity. %
%Most previous works focus on the isotropic and uniform Voronoi Tessellation. %
The Centroidal Voronoi Tessellation (CVT) method \citep{du1999centroidal}, along with other functionally-graded lattice structures \cite{martinez2018polyhedral, do2021homogenization}, was invented to control the spatial distribution of Voronoi cells. 
%However, heterogeneously distributed cellular patterns are more ubiquitous in nature. Efforts have been made to generate functionally-graded lattice structures based on homogenization theory \cite{martinez2018polyhedral, do2021homogenization}.
% Furthermore, the generalizability of Voronoi structure can be further enhanced by enabling spatially heterogeneous distribution \citep{ElSaidBassam2016Mmos}. Topology optimization investigates the cell distribution based on the local material properties of a design domain. 
% Martinez et al. \citep{martinez2018polyhedral} procedurally generate Voronoi micro-structure with varying local elasticity. Similarly, \citet{do2021homogenization} devised functionally-graded Voronoi tessellation based on homogenization theory. 
Wu et al.~\citep{wu2017infill} utilized local volume constraints to design infill structures on a full scale, which formed uniform lattice structures. Non-uniform local volume constraints have been later proposed to design heterogeneous lattice structures~\cite{yi2019topology,schmidt2019structural}.
% (3) free boundary
Lastly, numerous cellular structures in nature form circular boundaries by themselves. But the regular Voronoi partition spans the entire domain. Other cellular-related representations such as the open-cell porous structure have also been investigated~\cite{martinez2016procedural, tian2020organic}. However, there is no existing work that enables the representation of cellular structure with free boundaries.

% \citet{daynes2019design} re-interpolate intermediate density into lattice structure based on strain trajectories.

% discretization
Third, it is challenging to devise an efficient numerical discretization to support cellular structure simulation and optimization.
The existing discrete data structures underpinning topology optimization applications are composed mainly of grids, particles/meshes, and their hybrid. 
%There have been purely mesh-based methods that are capable of producing cellular structures.
% purely grid-based
%Density-based topology optimization can be used directly on a grid to generate lattice structures by imposing the maximum allowable local volume fraction restrictions~\cite{guest2009imposing,wu2017infill}. 
The lattice grid is the most popular data structure empowering density-based topology optimization~\cite{bendsoe2003topology}. 
%\bo{be careful with the citation here? Should be some important ones. Yuqing: what is the best citation here? Maybe 99-line?}. 
However, this approach requires a very high grid resolution and well-designed filters to emerge local fine features \cite{wu2015system, liu2018narrow} 
%\bo{Cite Haixiang's and Jun Wu's high-res topoOpt paper here, and others. Yuqing: any suggestion?}. 
Homogenization-based topology optimization plus an additional dehomogenization-based post-processing step have been applied to design cellular structures in a multi-scale way~\cite{wu2019design, groen2020homogenization, telgen2022topology}. The resulting optimized designs, however, are considered near-optimal due to the reconstruction errors from dehomogenization. It is noted that a similar framework has also been applied to the optimal design of microreactors with space-filling microchannel flow fields~\cite{zhou2022inverse}. By simultaneously optimizing micro-structural cell geometries and their distributions on the macro scale, concurrent two-scale topology optimization can be used to design lattice structures~\cite{zhu2017two,deng2017concurrent,wang2019concurrent}. 
% \bo{Add my two-scale topoOpt paper here too.}
% purely Lagrangian
%Besides the grid-based methods, Lagrangian methods have also been investigated due to their superiority in maintaining basic building blocks, especially for manufacturing purposes, and they may be extended to porous structures.
%Since the Voronoi control points are in Lagrangian representation (that can freely move within the boundary) and the design domain is in Eulerian representation (where the quadrilateral elements are fixed), it is natural to combine both as a hybrid optimization scheme to approach the desirable Voronoi structure effectively.
Besides grid-based methods, Lagrangian representations, manifesting mainly as meshless particles \citep{li2021lagrangian} and simplicial meshes \cite{cho2006topology} play an important role in algorithms that co-optimize a target structure's shape and topology. In these approaches, the simulation is discretized directly on a mesh or particles, and the optimization is enabled by moving the mesh vertices or meshless points in the design domain according to their sensitivities. The remeshing overhead, including repairing the simplicial mesh or redistributing particles, is one of the main bottlenecks of these approaches. None of the existing Lagrangian methods can handle the optimization of complex cellular structures.
% hybrid
In addition to grid-based and particle-based approaches, hybrid schemes were also explored extensively in topology optimization. For instance, the moving morphing component method (MMC)~\cite{guo2014doing, liu2017additive} and the geometric projection method~\cite{norato2015geometry, kazemi2020multi} optimize the geometric parameters (e.g., position and orientation) of a set of moving primitives that characterize the structure by projecting the union of these components on a background grid. Similar ideas can be seen in Li et al.\citep{li2021lagrangian}, which utilizes the communication between fixed material-point quadratures and moving particles to achieve a sub-cell representation for complex structures. Motivated by these approaches, we propose a hybrid grid-particle representation to encode the cellular pattern.
Our approach distinguishes itself from the previous ones by co-optimizing the particles' positions and their local distance metric to describe the local cellular geometry, which enabled our method to evolve complicated structures without relying on abundant point samples or prohibitively high-resolution grids.

%In a similar way, Voronoi representation can also be embedded into the topology optimization system to fully take advantage of the merits of both Eulerian and Lagrangian approaches.

%, which can produce designs with clear boundaries with largely reduced design variables. However, they are not suitable to produce structures with complicated details such as cellular structures because the number of discrete components needs to be increased by a large order of magnitude. \citet{li2021lagrangian} utilized the communication between fixed MPM quadrature and moving particles to achieve sub-cell resolution in the optimized design. In a similar way, Voronoi representation can also be embedded into the topology optimization system to fully take advantage of the merits of both Eulerian and Lagrangian approaches.

% our approach
To address the aforementioned three challenges, we propose a computational approach that can differentiate, generalize, and explore the design space of complex cellular structures. 
% differentiability
We first formulate a differentiable Voronoi representation, inspired by the Soft Voronoi representation in the recent computer vision literature \citep{williams2020voronoinet}, where they devise a differentiable Voronoi diagram and embed it into a generative deep network for solid geometry representation.  
Akin to their method, we leverage a softmax function to partition the Voronoi regions in a differentiable fashion. We further devised a novel filter to extract the density-based Voronoi faces, which are essential for structural optimization, and devise a differentiable scheme to evolve their topology with respect to the design variables.
% generalization
Next, we developed a generalized Voronoi representation for anisotropic, spatially adaptive, and free-boundary-enabled cellular structures. By adopting Mahalanobis metric tensor \citep{richter2015mahalanobis} on each Voronoi site point, our method enables a heterogeneous and anisotropic Voronoi tessellation. By letting each point carry its metric tensor, the anisotropy of the Voronoi cells can vary vigorously over the design space. %By using the positions of the Voronoi control points as the design variables, the cells can move freely given a design domain too. 
These localized metric tensors, in conjunction with the moving particles, enable a spatially adaptive and anisotropic representation of cellular patterns, which greatly extend the design space of the original Voronoi diagram. To represent structures with a free boundary, we further modify our representation to support a foamy boundary, which enhances the traditional Voronoi diagram which functions primarily as a spatial partition. 
% hybrid discretization
Last, with the hybrid grid-particle discretization, our method incorporates the differentiable Voronoi representation into the conventional Solid Isotropic Material with Penalization method (SIMP) topology-optimization framework. Furthermore, we implement a k-nearest neighbor search method to localize the computation of the differentiable Voronoi diagram to reduce the computational cost of the Voronoi diagram from $O(N^2)$ to $O(k\log N)$.
% \bo{Need to talk about some contributions on improving the performance, e.g., highlighting the method's locality by mentioning the KNN neighbor search.}
%, which achieves cellular topology optimization with high efficiency.
% \bo{This paragraph is well written but it needs some expansion. Especially, how did we solve each challenges, and the connection between the techniques we proposed and the SOTA.} 

% Preview of the structure of our paper
The rest of the paper is organized as follows. We discuss the mathematical formulation of the differentiable, anisotropic, and spatially adaptive Voronoi partition with a free boundary in Section \ref{sec:diff_voronoi}. We then integrate the Voronoi formulation into the density-based topology optimization framework in Section \ref{sec:topo}. The overall algorithm workflow and optimization are presented in Section \ref{sec:implementation}. Section \ref{sec:results} presents extensive numerical results before conclusions are drawn in Section \ref{sec:discussion}.

\section{Differentiable Voronoi diagram}
In this section, we will formulate the mathematical definition for our differentiable, anisotropic, and spatially adaptive Voronoi representation with free boundary and its sensitivity analysis.
% \bo{An overview picture will be great.}

\label{sec:diff_voronoi}
\begin{figure}
     \centering
     \begin{subfigure}{0.28\textwidth}
         \includegraphics[width=\textwidth]{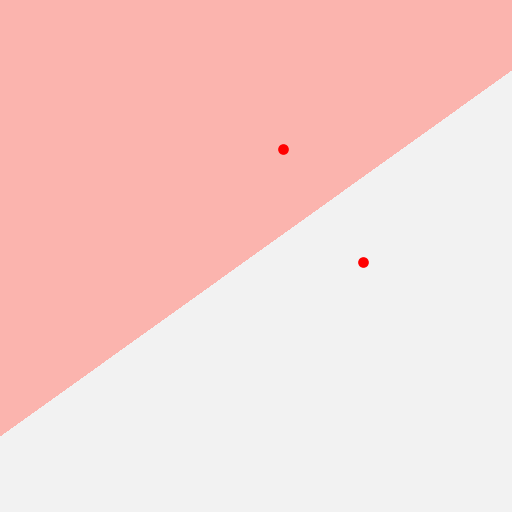}
         \caption{Standard Voronoi tessellation.\\}
         \label{fig:2_pts_voronoi_patch}
     \end{subfigure}
     \hfill
     \begin{subfigure}{0.42\textwidth}
         \includegraphics[width=\textwidth]{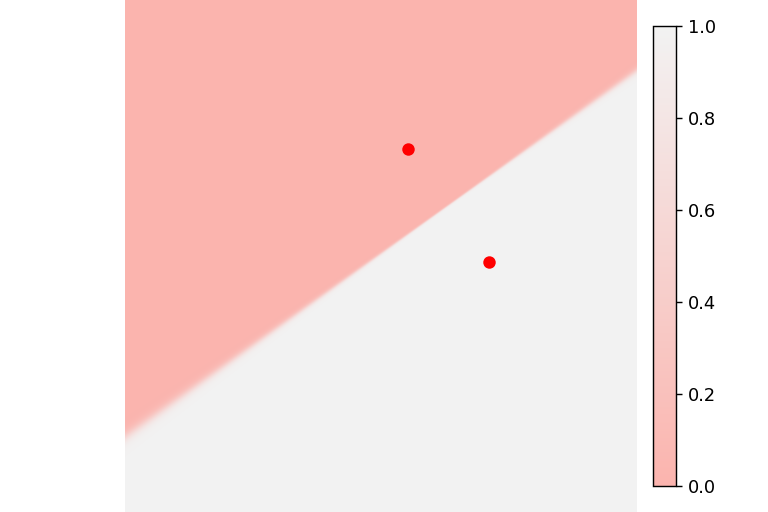}
         \caption{Voronoi tessellation using the Softmax function.\\}
         \label{fig:2_pts_softmax}
     \end{subfigure}
     \hfill
     \begin{subfigure}{0.28\textwidth}
         \includegraphics[width=\textwidth]{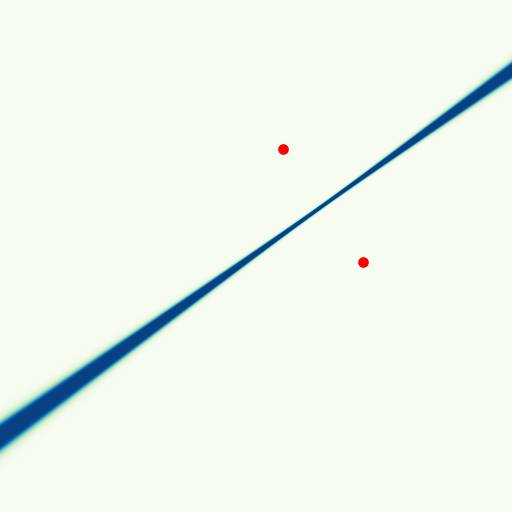}
         \caption{The extracted Voronoi edge between two regions ($\rho$).}
         \label{fig:2_pts_diff}
    \end{subfigure}
    \caption{Visualization of different quantities formed with two Voronoi points. The red dots indicate the positions of the control points.}
    \label{fig:2_pts_voronoi}
\end{figure}

\begin{figure}
     \centering
     \begin{subfigure}[b]{0.32\textwidth}
         \centering
         \includegraphics[width=\textwidth]{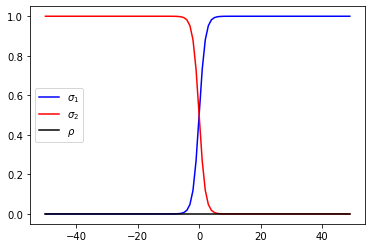}
         \caption{$\beta=1$}
         \label{fig:beta_1}
     \end{subfigure}
     \hfill
     \begin{subfigure}[b]{0.32\textwidth}
         \centering
         \includegraphics[width=\textwidth]{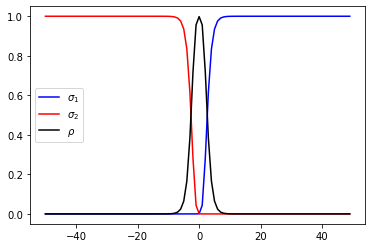}
         \caption{$\beta=10$}
         \label{fig:beta_10}
     \end{subfigure}
     \hfill
     \begin{subfigure}[b]{0.32\textwidth}
         \centering
         \includegraphics[width=\textwidth]{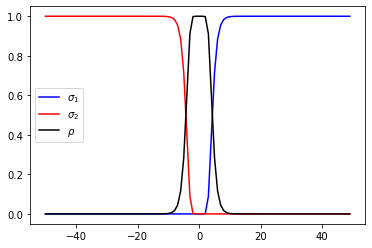}
         \caption{$\beta=50$}
         \label{fig:beta_50}
    \end{subfigure}
    \hfill
    \caption{Visualization of the softmax function $\sigma_1$ (blue), $\sigma_2$ (red) and $\rho$ (black) in one dimension with different $\beta$ values.}
    \label{fig:beta}
\end{figure}
\subsection{Differentiable Voronoi diagram}
% definition of Voronoi
Given a set of site points $P = \{\bm x_m~|~m = 1,2,\cdots,N^c\}$ in $N^d$ dimensional domain $\Omega$, a Voronoi diagram is a partition of $\Omega$ into $N^c$ polygonal regions, called Voronoi cells, with each Voronoi cell $V_m$ associated with a site point $\bm x_m$. Each Voronoi cell manifests a geometric property that $\bm x_m$ is the closest of the given points to any $\bm x \in V_m$ \cite{shamos1975closest}. In other words, if $\textrm{d}_m (\bm x,\bm x_m)$ denotes the norm between the points $\bm x$ and $\bm x_m$, then 
\begin{equation}
    V_m =\{\bm x\in \Omega~|~\textrm{d}_m (\bm x,\bm x_m)\leq \textrm{d}_n (\bm x,\bm x_n)~~\textrm{for all}~~ n = 1,2,\cdots, N^c\}.
\end{equation}

We remark that the norm $\textrm{d}_m$ can take different forms. If $\textrm{d}_m$ is a Euclidean norm, all Voronoi cells are bounded by some straight segments; otherwise, the boundary of some Voronoi cells may be curved segments (non-Euclidean norms will be discussed in Section \ref{sec:aniso_voronoi}).
% We denote the Voronoi partition operation as 
% \begin{equation}
%     \lambda(\bm x) = \mathop{\arg\min}\limits_{m\in \{1,2,\cdots,n\}}\textrm{d}_m (\bm x,\bm x_m),
% \end{equation}
% i.e. $\forall x \in V_m, \lambda(\bm x)=m$.
% %See Figure \ref{fig:2_pts_voronoi_patch} for a standard Voronoi tessellation and Figure \ref{fig:2_pts_min_distance} for the field of minimum distance to Voronoi points. 
% \bo{The transition is sharp here. Introduce density with Heaviside function first and then softmax.}
% Fan:needs to be changed with the picture here
Based on the Euclidean norm, we show in Figure (\ref{fig:2_pts_voronoi_patch}) the partition of a two-dimensional domain into two Voronoi cells.

To enable a differentiable partition, we introduce a modified Voronoi partition based on the softmax function \cite{boltzmann1868studien} as follows:
\begin{equation}
    S_m(\bm x) = f_m^{N^c}\left[e^{-\textrm{d}_1(\bm x,\bm x_1)},\cdots,e^{-\textrm{d}_{N^c}(\bm x,\bm x_{\rv{N^c}})}\right], ~~ m\in I_v,
    \label{eq:s_def}
\end{equation}
where $f_m^{N^c}$ is a fractional equation
\begin{equation}
f^{N^c}_m(x_1,\cdots,x_{N^c}) = \frac{x_m}{\sum_{n\in I_v}x_n}, 
\end{equation}
and $I_v$ is an index set with
\begin{equation}
   I_v\equiv\{1,\cdots,N^c\}.
   \label{eq:I_v}
\end{equation}
The softmax function is originated from physics and statistical mechanics, which recently has also been widely used as an activation function in modern neural networks \cite{bridle1990probabilistic}. The softmax function selects one item from a group by following some distance metric in a ``soft" (and therefore differentiable) manner. In other words, we can build a program without an ``if" statement to calculate the 0-1 density distribution that specifies the region of a given Voronoi point. Figure (\ref{fig:2_pts_softmax}) shows the Voronoi partition formed by the softmax function, where the gray region has a value of one and the red region has a value of zero to indicate the coverage of the lower point. 
%We can see that it is a good approximation to the conventional Voronoi definition.

In our modified Voronoi partition, we calculate the density distribution for a specific site point in $\Omega$ as
\begin{equation}
\rho (\bm x) =1- \sum_{m\in I_v} \left[S_m(\bm x)\right]^{\beta},
\label{eq:rho}
\end{equation}
where $\beta$ controls the sharpness of the edge as illustrated in Figure \ref{fig:beta}. As shown in Figure (\ref{fig:beta_1}), when $\beta=1$, no edge will appear since the softmax values are simply added together. Comparing Figure (\ref{fig:beta_10}) and Figure (\ref{fig:beta_50}), we can see there is a positive correlation between the value of $\beta$ and the sharpness of the edges. \rv{It is noticeable that as $\beta$ increases, the thickness of edges also increases. We did not treat $\beta$ as an optimizing variable because there is a more direct relationship between the edge thickness and metric tensor, as discussed in the next subsection.}

\begin{figure}
     \centering
     \begin{subfigure}{0.3\textwidth}
         \includegraphics[width=\textwidth]{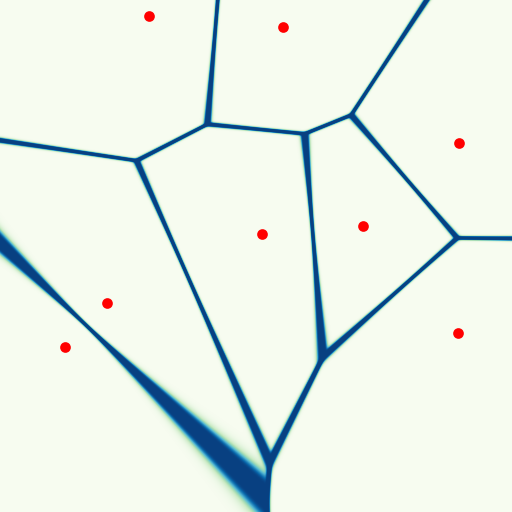}
         \caption{Isotropic Voronoi diagram with \\ $\bm D_m=[500,0\rv{;~}0,500]$. \\}
         \label{fig:8_pts_isotropic}
     \end{subfigure}
     \hfill
     \begin{subfigure}{0.3\textwidth}
         \includegraphics[width=\textwidth]{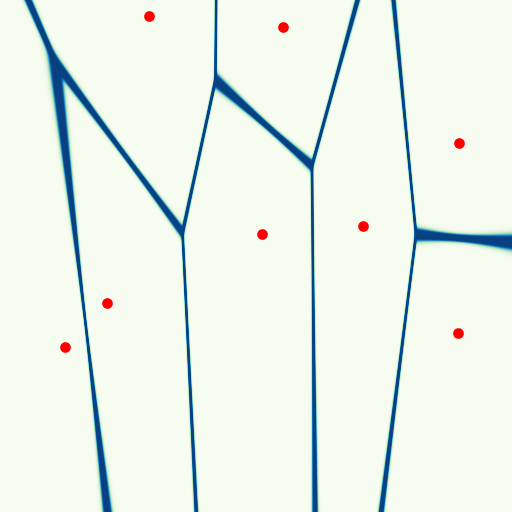}
         \caption{Anisotropic Voronoi diagram with \\ $\bm D_m=[750,0\rv{;~}0,250]$. \\}
         \label{fig:8_pts_anisotropic}
     \end{subfigure}
     \hfill
     \begin{subfigure}{0.3\textwidth}
         \includegraphics[width=\textwidth]{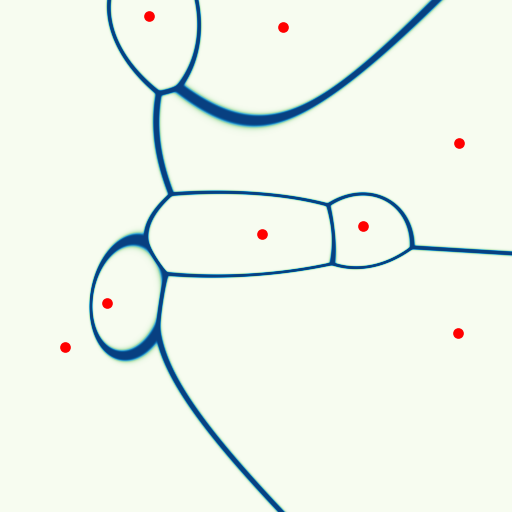}
         \caption{Heterogeneously Anisotropic Voronoi diagram with diagonal entries ranging $[0,1000]$ and off-diagonal entries ranging $[-50,50]$. }
         \label{fig:8_pts_anisotropic_het}
    \end{subfigure}
    \caption{Illustration of the anisotropic Voronoi diagram. Domain is a $1\times1$ square.}
    \label{fig:anisotropy}
\end{figure}

We remark that the density distribution in Eq.~\eqref{eq:rho} defined based on softmax functions in Eq.~\eqref{eq:s_def} can subdivide the domain into approximate Voronoi cells in which the cell boundaries have a finite thickness as shown in Figure \eqref{fig:2_pts_diff}. 
% Therefore, the proposed differentiable Voronoi formulation can be seamlessly integrated into a gradient-based topology optimization framework. 

\subsection{Anisotropic Voronoi structure}
\label{sec:aniso_voronoi}
%To make the Voronoi partition more flexible, such that each Voronoi cell can have a curved boundary instead of a simplicial polygon, 
We enhance the local geometric expressiveness by defining the distance metric between points based on Mahalanobis distance \cite{mahalanobis1936generalized}
\begin{equation}
    \textrm{d}_m(\bm x, \bm x_m )= \sqrt{(\bm x-\bm x_m)^T \bm A_m (\bm x-\bm x_m)},
    \label{eq:dmxm}
\end{equation}
where 
\begin{equation}
   \bm A_m = \bm D_m \bm D_m^T,
   \label{eq:Am}
\end{equation}
% fan: I deleted epsilon since we just use zero
$\bm D_m$ is a $N^d$-by-$N^d$ symmetric matrix. This construction ensures that $\bm A_m$ is a positive definite symmetric matrix. 

Figure \ref{fig:anisotropy} illustrates the difference between isotropic Voronoi tessellation (Figure \ref{fig:8_pts_isotropic}), anisotropic Voronoi tessellation (Figure \ref{fig:8_pts_anisotropic}) where metric tensors are the same for each control point, and heterogeneous anisotropic Voronoi tessellation (Figure \ref{fig:8_pts_anisotropic_het}) where metric tensors are different for each control point. Curved edges only form when we have spatially varying metric tensors. \gv{It is worth noticing that we use "anisotropic" to feature the geometry of the Voronoi structure characteristics, which only refers to how we define the Voronoi distance, in contrast to the material anisotropy used in continuum mechanics.}

\begin{figure}
    \begin{subfigure}{0.3\textwidth}
         \includegraphics[width=\textwidth]{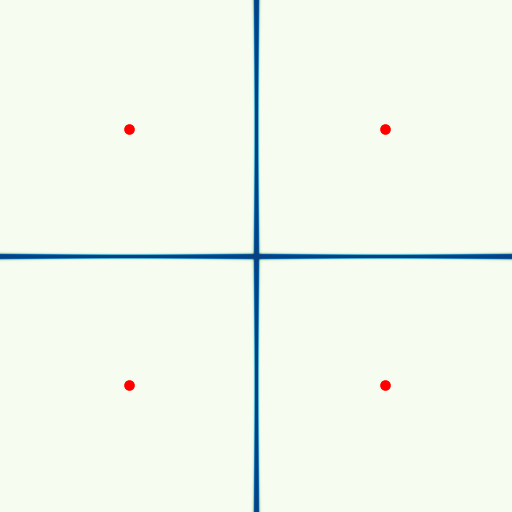}
         \caption{Voronoi cells Without free boundaries, equivalent to $\epsilon_s=0$ with $\bm D_m=500 \bm I$.\\ \newline}
         \label{fig:fb}
    \end{subfigure}
    \hfill
    \begin{subfigure}{0.3\textwidth}
        \includegraphics[width=\textwidth]{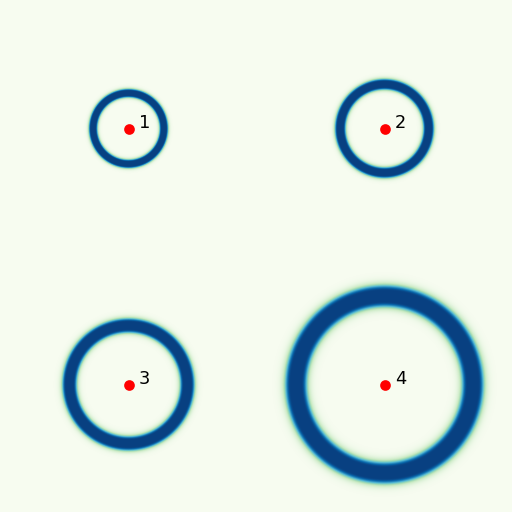}
        \caption{Free boundary cells in different sizes with $\bm D_1=500\bm I$, $\bm D_2=400\bm I$, $\bm D_3=300\bm I$, $\bm D_4=200\bm I$ respectively, and $\epsilon_s=1e-15$.}
        \label{fig:fb_sizes}
    \end{subfigure}
    \hfill
    \begin{subfigure}{0.3\textwidth}
         \includegraphics[width=\textwidth]{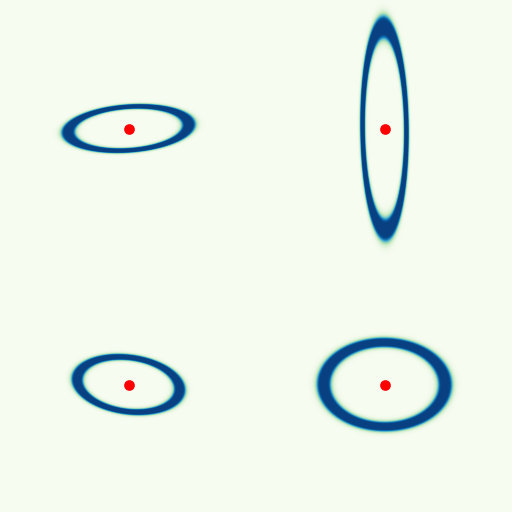}
         \caption{Free boundary cells with $\bm D_m$ diagonal entries ranging $[0,1000]$ and off-diagonal entries ranging $[-50,50]$.  \newline}
         \label{fig:fb_aniso_het}
    \end{subfigure}
    
    \caption{Illustration of Voronoi diagram with free boundary.}
    \label{fig:free_boundary}
\end{figure}

\subsection{Voronoi structure with free boundary}
\label{sec:fb_voronoi}
%For cases where the Voronoi edges spanning the entire domain are not desired, 
We introduce a positive constant $\epsilon_s$ to characterize the cellular structure with a free boundary. The key idea is to introduce a universal virtual point that is outside all the current Voronoi points to help define a Voronoi face that is on the free boundary. In particular, we redefine $I_v$ in \eqref{eq:I_v} and $S_m$ in \eqref{eq:s_def} as
\begin{equation}
    I_v = \{0,\cdots,N^c\}
    \label{eq:Iv_0}
\end{equation}
and 
\begin{equation}
    S_m(\bm x) = f_m^{N^c+1}\left[\epsilon_s,e^{-\textrm{d}_1(\bm x,\bm x_1)},\cdots,e^{-\textrm{d}_{N^c}(\bm x,\bm x_{\rv{N^c}})}\right], ~~ m\in I_v.
    \label{eq:Sm}
\end{equation}
Here, the definition of 
\begin{equation}
    S_{0}(\bm x) = \frac{\epsilon_s}{\sum_{n=1}^{N^c} e^{-\textrm{d}_n(\bm x,\bm x_n)}+ \epsilon_s},
    \label{eq:S0}
\end{equation}
and
\begin{equation}
    S_{m}(\bm x) = \frac{e^{-\textrm{d}_m(\bm x,\bm x_m)}}{\sum_{n=1}^{N^c} e^{-\textrm{d}_n(\bm x,\bm x_n)}+ \epsilon_s},
    \label{eq:Sm_fb}
\end{equation}
ensures that $\rho$ in \eqref{eq:rho} has a compact support. 
% The smaller the value $\epsilon_s$ is, the larger the radius of the formed circular free boundary. When $\epsilon_s=0$, there is no circular free boundary, and the structures are pure Voronoi partitions.

The effect of free boundary and how different metric tensors influence the sizes and shapes of the cells are illustrated in Figure \ref{fig:free_boundary}. Figure \ref{fig:fb}) shows the differentiable Voronoi diagram without a free boundary. Figure \ref{fig:fb_sizes}) illustrates the points with different metric tensors $D_m$ forming different sizes of free boundaries. Figure \ref{fig:fb_aniso_het}) shows cells in different shapes are formed by also using different $D_m$ on each control point.

\subsection{Differentiation of the modified Voronoi partition}
The distribution $\rho$ in \eqref{eq:rho} with \eqref{eq:dmxm}, \eqref{eq:Iv_0}, and \eqref{eq:Sm} is determined by positions $\bm x_m, m = 1,2,\cdots,N^c$ of the Voronoi centers, matrices $\bm D_m, m = 1,2,\cdots,N^c$ associated with metrics in \eqref{eq:Am}, and one constant $\epsilon_s$. These parameters can generate material distributions with different geometric properties and therefore can be used as design variables that feature a cellular structure. For computing partial derivatives of $\rho$ with respect to parameters $\bm x_m$ and $\bm D_m$, we summarize the results as the following theorem:
\begin{theorem}
Assuming $\rho$ is defined by \eqref{eq:rho} with \eqref{eq:dmxm}, \eqref{eq:Iv_0}, and \eqref{eq:Sm}, its partial derivatives with respect to parameters $\bm x_m, \bm D_m$ can be calculated as
\begin{equation}
    \begin{dcases}
     \frac{\partial \rho(\bm x) }{\partial \bm x_n}
    = -\sum_{m=0}^{N^c}\beta \left[S_m(\bm x)\right]^{\beta-1}\frac{\partial  S_m (\bm x)}{\partial \bm x_n},~n=1,2,\cdots,N^c\\
    \frac{\partial \rho (\bm x)}{\partial \bm D_n}
    = -\sum_{m=0}^{N^c}\beta \left[S_m(\bm x)\right]^{\beta-1}\frac{\partial  S_m (\bm x) }{\partial \bm D_n},~n=1,2,\cdots,N^c\\
    \end{dcases}
    \label{eq:grad_rho}
\end{equation}
where $\forall m,n=1,2,\cdots,N^c,$
\begin{equation}
    \begin{dcases}
    \frac{\partial S_{m}(\bm x)}{\partial \bm x_n}
    = S_m(\bm x)[S_n(\bm x)-\delta_{mn}]\bm Y_n(\bm x),\\
    \frac{\partial S_{0}(\bm x)}{\partial \bm x_n}
    =  S_{0}(\bm x)S_n(\bm x)\bm Y_n(\bm x),\\
    \frac{\partial S_{m}(\bm x)}{\partial \bm D_n}
    =S_m(\bm x)[S_n(\bm x)-\delta_{mn}] \textrm{d}_n(\bm x, \bm x_n)\left[\bm X_n (\bm x)\otimes \bm X_n(\bm x)\right]\bm D_n,\\
     \frac{\partial S_{0}(\bm x)}{\partial \bm D_n}
    =S_{0}(\bm x)S_n(\bm x) \textrm{d}_n(\bm x, \bm x_n)\left[\bm X_n (\bm x)\otimes \bm X_n(\bm x)\right]\bm D_n.\\
    \end{dcases}
    \label{eq:dSm}
\end{equation}
Here,
\begin{equation}
\begin{dcases}
\bm X_n(\bm x)=\frac{\bm x_n- \bm x}{\textrm{d}_n(\bm x, \bm x_n)},~n=1,2,\cdots,N^c,\\
    \bm Y_n (\bm x) = \bm A_n(\bm x)\bm X_n(\bm x) ,~n=1,2,\cdots,N^c,\\
\end{dcases}
\label{eq:XY}
\end{equation}
and $\delta_{mn}$ is the Kronecker delta function
\begin{equation}
    \delta_{mn}=\begin{dcases}
    1,~~~~m=n,\\
    0,~~~~m\neq n.
    \end{dcases}
\end{equation}
\end{theorem}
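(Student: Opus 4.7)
The overall strategy is a mechanical chain-rule calculation with a softmax-derivative identity at its core, so the plan is to peel off the outer layers first and then reduce everything to one scalar derivative $\partial d_n / \partial(\cdot)$ that can be computed by matrix calculus.

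First I would differentiate the definition $\rho = 1 - \sum_{m \in I_v} [S_m]^\beta$ directly by the chain rule, producing the factor $-\beta [S_m]^{\beta-1}$ that appears in \eqref{eq:grad_rho} and reducing the problem to computing $\partial S_m / \partial \bm x_n$ and $\partial S_m / \partial \bm D_n$ for every $m \in I_v$ and every $n \in \{1,\dots,N^c\}$. The next step is the key simplification: introduce the "logit" variable $z_k := -\textrm{d}_k(\bm x, \bm x_k)$ so that for $m \geq 1$ we have $S_m = e^{z_m}/(\sum_k e^{z_k} + \epsilon_s)$ and $S_0 = \epsilon_s/(\sum_k e^{z_k}+\epsilon_s)$. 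A one-line calculation with the quotient rule gives the standard softmax identities
\begin{equation}
\frac{\partial S_m}{\partial z_k} = S_m(\delta_{mk} - S_k), \qquad \frac{\partial S_0}{\partial z_k} = -S_0 S_k, \qquad m,k \geq 1,
\end{equation}
and the crucial observation is that $z_k$ depends on $\bm x_n$ (or $\bm D_n$) only when $k=n$, so the ambient chain-rule sum $\sum_k (\partial S_m/\partial z_k)(\partial z_k/\partial \bm x_n)$ collapses to the single term $k=n$. This reduces everything to computing $\partial z_n/\partial \bm x_n$ and $\partial z_n/\partial \bm D_n$, i.e.\ a single derivative of the Mahalanobis distance.

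For $\partial z_n/\partial \bm x_n = -\partial \textrm{d}_n/\partial \bm x_n$, I would differentiate the square root in \eqref{eq:dmxm} to get $\partial \textrm{d}_n/\partial \bm x_n = -\bm A_n (\bm x - \bm x_n)/\textrm{d}_n = \bm A_n \bm X_n = \bm Y_n$ using the notation in \eqref{eq:XY}, so that $\partial z_n/\partial \bm x_n = -\bm Y_n$. Combining with the softmax identities and the sign flip immediately produces $\partial S_m/\partial \bm x_n = S_m(S_n - \delta_{mn})\bm Y_n$ and $\partial S_0/\partial \bm x_n = S_0 S_n \bm Y_n$, matching \eqref{eq:dSm}. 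For the $\bm D_n$ derivative I would write $\textrm{d}_n^2 = \bm u^T \bm D_n \bm D_n^T \bm u$ with $\bm u := \bm x - \bm x_n$ and compute $\partial (\textrm{d}_n^2)/\partial \bm D_n = 2\bm u\bm u^T \bm D_n$ by expanding in indices (or by using the standard identity $\partial \operatorname{tr}(\bm D \bm D^T \bm M)/\partial \bm D = 2\bm M \bm D$ for symmetric $\bm M = \bm u\bm u^T$); then $\partial \textrm{d}_n/\partial \bm D_n = (1/\textrm{d}_n)\bm u\bm u^T \bm D_n = \textrm{d}_n (\bm X_n \otimes \bm X_n) \bm D_n$ after substituting $\bm u = -\textrm{d}_n \bm X_n$, and $\partial z_n/\partial \bm D_n$ is its negative. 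Feeding this back through the softmax derivatives yields the remaining two lines of \eqref{eq:dSm}.

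The only genuinely delicate step is the matrix calculus on $\bm u^T \bm D_n \bm D_n^T \bm u$; everything else is bookkeeping. The main obstacle I anticipate is keeping the index/tensor conventions consistent (in particular making sure $\bm u\bm u^T \bm D_n$ is read as the same layout as $\bm D_n$, so that the stated equalities make sense as matrix-valued derivatives) and correctly distinguishing the $m=0$ case from $m \geq 1$ throughout the chain-rule bookkeeping. Once these conventions are fixed, the four identities in \eqref{eq:dSm} drop out by direct substitution, and plugging them into \eqref{eq:grad_rho} completes the proof.
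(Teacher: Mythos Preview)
Your proposal is correct and follows essentially the same route as the paper: first peel off the outer power via the chain rule to obtain \eqref{eq:grad_rho}, then establish the softmax-type identity for $\partial S_m/\partial \textrm{d}_l$ (your $\partial S_m/\partial z_k$ is the same thing up to the sign flip from $z_k=-\textrm{d}_k$), observe that only the term $l=n$ survives, and finally compute the scalar derivatives of the Mahalanobis distance with respect to $\bm x_n$ and $\bm D_n$ exactly as you describe. The paper's proof is just a terser version of your write-up, differentiating with respect to $\textrm{d}_l$ directly rather than introducing the logit variable $z_k$.
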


\begin{proof}
Equations \eqref{eq:grad_rho} can be derived directly from the chain rule and \eqref{eq:rho}. Next, we prove \eqref{eq:dSm}.
From \eqref{eq:Sm}, we have $\forall m,l=1,2,\cdots,N^{c}$
\begin{equation}
    \begin{dcases}
\frac{\partial S_m}{\partial \textrm{d}_l(\bm x, \bm x_l)}
=S_m\left(S_n-\delta_{ml}\right),\\
\frac{\partial S_0}{\partial \textrm{d}_l(\bm x, \bm x_l)}
=S_0S_l.\\
    \end{dcases}
    \label{eq:dSm0}
\end{equation}
Taking the partial derivatives of \eqref{eq:dmxm} yields
\begin{equation}
\begin{dcases}
 \frac{\partial \textrm{d}_l(\bm x, \bm x_l)}{\partial \bm x_n}=\delta_{ln} \bm Y_n,\\
 \frac{\partial\textrm{d}_l(\bm x, \bm x_l)}{\partial \bm D_n} = \delta_{ln} \textrm{d}_n(\bm x, \bm x_n) [\bm X_n(\bm x) \otimes \bm X_n(\bm x)]\bm D_n.
\end{dcases}
\label{eq:ddl}
\end{equation}
Combining \eqref{eq:dSm0} and \eqref{eq:ddl} using the chain rule yeilds \eqref{eq:dSm}.
\end{proof}

\section{Topology optimization}
\label{sec:topo}
% \subsection{Density Projection}
% In order to obtain a binary density field, we adopt the relaxed Heaviside projection \cite{wang2011projection}
% \begin{equation}
%     \Tilde{\rho}=f_H(\rho)\equiv\frac{\tanh(\gamma\eta)+\tanh(\gamma(\rho-\eta))}{\tanh(\gamma\eta)+\tanh(\gamma(1-\eta))},
%     \label{eq:projection}
% \end{equation}
% where $\eta=0.5$ is the threshold and $\gamma$ controls the sharpness of the function. We use parameter continuation to double $\gamma$ starting from 1 every 50 frames, which gradually produces a more binary density field.

\subsection{Incorporating Voronoi to topology optimization} \label{subsec:c_topo}
In topology optimization, we store the density and projected density distributions $\rho$ on a structured Cartesian grid in domain $\Omega$, the density and projected density distributions at each grid point are denoted as $\rho_e, e=1,\cdots,N^e$. 
Without causing ambiguity, we also denote $\rho$ and as vectors consisting of all discrete $\rho_e$ respectively.
Following \cite{ferrari2020new}, we write a topology optimization problem as the minimization of the mean compliance $c(\bm x, \bm D)$ for the filtered density distribution $\Tilde{\rho}$ in the nested formulation
\begin{equation}
    \mathop{\rm minimize}_{\bm x, \bm D} \quad c(\bm x, \bm D) := \bm F^T \bm U \quad
    \textrm{subject to} \quad \frac{V(\Tilde{\rho})}{V(\Omega)}\leq \bar{f},
\label{eq:topo}
\end{equation}
where $\Tilde{\rho}=f_H(\rho)\equiv\frac{\tanh(\gamma\eta)+\tanh(\gamma(\rho-\eta))}{\tanh(\gamma\eta)+\tanh(\gamma(1-\eta))}$ is  the relaxed Heaviside projection \cite{wang2011projection} with threshold $\eta=0.5$ and $\gamma$ starting from 1 doubled every 50 frames, gradually producing a more binary filtered density field. $\bm{F}$ is the external force vector, $\bm{U}$ is the global displacement vector, $V(\Tilde{\rho})$ is the material volume, $V(\Omega)$ is the volume of the design domain $\Omega$, and $\bar{f}$ is the prescribed volume fraction. Design variables are the positions $\bm x$ and metric tensors $\bm D$ of the site points. $\bm x$ takes a range \rv{at least three times as large as the design domain on each axis} (i.e., Voronoi points can be positioned outside of the design domain \rv{and not form density inside, thus not influencing the final structure}). The diagonal entries in $\bm D$ take a range as large as $[0,2000]$, where a larger range produces a more variation in edge thickness, and the off-diagonal terms in $\bm D$ have the range $[-100,100]$. The global displacement vector $\bm{U}$ is obtained by solving the following force equilibrium
\begin{equation}
    \bm K(\bm x, \bm D)\bm U=\bm F,
\end{equation}
where $\bm K$ is the global stiffness matrix.

Following the modified Solid Isotropic Material with Penalization (SIMP) method, the stiffness matrix of each grid element is interpolated as follows,
\begin{equation}
\begin{aligned}
    E_e&=E_{\mathrm{min}}+\Tilde{\rho}_e^p(E_0-E_{\mathrm{min}}),\\
    \bm k_e&=E_e\bm k_0,
\end{aligned}
\label{eq:topo_stiffness}
\end{equation}
where $E_e$ is the interpolated Young's modulus of each element, $E_0$ is Young's modulus of a completely solid element, $E_{\mathrm{min}}$ is a very small value to prevent singularity of the global stiffness matrix \gv{and $\bm k_0$ is the element stiffness matrix given by $\bm k_0=\int_{V} \bm B^T \bm E \bm B \,dV$ with Young's modulus equal to 1. In the analytical expression of $\bm k_0$, $V$ is the volume of one element, $\bm B$ is the strain displacement Matrix and $\bm E$ is the stress-strain matrix. One can form the element stiffness matrix through Gaussian quadrature discretization or a pre-calculated matrix. We refer readers to \cite{sigmund200199} and \cite{liu2014efficient} for the formulation of the element stiffness matrix in 2D and 3D respectively. We assemble $\bm K$ as a sparse matrix by filling in the entries of nodes associated with each element stiffness matrix.} Here we choose the penalizing power $p=1$ because our Voronoi formulation forms relatively binary pattern by its definition.
% already by itself and using a larger penalizing power can lead to slower convergence based on our experiments.

\subsection{Sensitivity analysis}
The sensitivity analysis is given as followings using chain rule
\begin{equation}
\begin{dcases}
    \frac{\partial c}{\partial \bm x_m}=\sum_e\frac{\partial{c}}{\partial \Tilde{\rho}_e}\frac{\partial{\Tilde{\rho}_e}}{\partial \rho_e}\frac{\partial \rho_e}{\partial \bm x_m},\\
    \frac{\partial c}{\partial \bm D_m}=\sum_e\frac{\partial{c}}{\partial \Tilde{\rho}_e}\frac{\partial{\Tilde{\rho}_e}}{\partial \rho_e}\frac{\partial \rho_e}{\partial \bm D_m},\\
    \frac{\partial V}{\partial \bm x_m}=\sum_e\frac{\partial V}{\partial \Tilde{\rho}_e}\frac{\partial \Tilde{\rho}_e}{\partial \rho_e}\frac{\partial \rho_e}{\partial \bm x_m},\\
    \frac{\partial V}{\partial \bm D_m}=\sum_e\frac{\partial V}{\partial \Tilde{\rho}_e}\frac{\partial \Tilde{\rho}_e}{\partial \rho_e}\frac{\partial \rho_e}{\partial \bm D_m},\\
\end{dcases}
\label{eq:d_all}
\end{equation}
where
\begin{equation}
\begin{dcases}
    \frac{\partial c}{\partial \Tilde{\rho}_e}=-p\Tilde{\rho}_e^{p-1}(E_0-E_{\mathrm{min}})u_e^T k_0 u_e,\\
    \frac{\partial \Tilde{\rho}_e}{\rho_e}=\gamma\frac{1-\tanh(\gamma (\rho_e-\eta))^2}{\tanh(\gamma \eta)+\tanh(\gamma(1-\eta))},
\end{dcases}
\end{equation}
and $\partial \rho_e/\partial \bm x_m,\partial \rho_e/\partial \bm D_m$ are defined as \eqref{eq:grad_rho}.

% We also compute the objective for the secondary structure based on Centroid Voronoi Tessellation (CVT) \cite{du1999centroidal} to make the points in the second layer distribute more evenly. 
% For each point in the main structure we have:

% \begin{equation}
%     F_i(\bf{x})=\sum_{j=1}^{n_2}F_i(X)=\sum_{j=1}^{n_2}\int_{\Omega_i}w(x)\lVert x_i-x_{ij}\rVert^2.
%     \label{eq:CVT}
% \end{equation}

% Here i is the index of the point of the main structure and j is the index of the point in the secondary structure. $w$ is the weight, that is the multiplication of softmax of two layers. The CVT part of the objective for the secondary structure is calculated by $\sum_i^n F_i$.

% Fan: may not be necessary
\begin{algorithm}
\caption{Calculation of differentiable Voronoi diagram}
\textbf{Input:} Positions Voronoi points $\bm x$, Metric tensor $\bm D$\\
\textbf{Output:} Neighbors field $nbs$, Density field $\rho$
\begin{algorithmic}[1]
\For{$i \gets 0$ to $N^c$}  \Comment{update $\bm A$}
\State $\bm A[i]=\bm D[i]@\bm D[i].T$ 
\EndFor
\For{$i \gets 0$ to $N^e$} \Comment{update neighbors}
\State $nbs[i] \gets$ Find\_K\_Nearest\_Nb($\bm x_i$, $\bm x$)  \Comment{stores the index of the i\textsuperscript{th} neighbor}
\EndFor
\For{$i \gets 0$ to $N^e$} \Comment{update denominator in Eq. \ref{eq:Sm_fb}}
    \For{$j \gets 0$ to $k$}                    
        \State $Sum$ $\gets$ {$Sum + \exp(-d(\bm x_i,\bm x_{nbs[i][j]}))$}
    \EndFor
    \State $softmax\_sum[i]\gets Sum + \epsilon_s$ \Comment{store the intermediate softmax sum in a field}
\EndFor

\For{$i \gets 0$ to $N^e$}  \Comment{update $\rho$}
    \For{$j \gets 0$ to $k$}
        \State $\rho_i \gets 1-\left[\exp \left(\frac{-d(\bm x_i,\bm x_{nbs[i][j]})}{softmax\_sum[i]}\right)\right]^{\beta}$ 
    \EndFor
\EndFor
\end{algorithmic}
\label{alg:density}
\end{algorithm}

\begin{algorithm}
\gvc
\caption{\gvc Calculation of $\partial \rho / \partial \bm x$}
\textbf{Input:}field of int array $nbs$, field  $\frac{\partial{c}}{\partial \Tilde{\rho}_e}$ and $\frac{\partial{\Tilde{\rho}_e}}{\partial \rho_e}$\\
\textbf{Output:} derivative of compliance with respect to Voronoi point positions $\frac{\partial c}{\partial \bm x}$
\begin{algorithmic}[1]
\For{$e \gets 0$ to $N^e$} \Comment{calculate $\frac{\partial \rho_e(\bm x)}{\partial \bm x_{nbs[e][m]}}$, parallelizable}
\For{$m \gets 0$ to $k$}
\State $\frac{\partial \rho_e(\bm x)}{\partial \bm x_{nbs[e][m]}}\gets \bm 0$
\For{$n \gets 0$ to $k$}
\State $\frac{\partial S_{nbs[e][m]}(\bm x)}{\partial \bm x_n} \gets S_{nbs[e][m]}(\bm x)[S_{nbs[e][n]}(\bm x)-\delta_{(nbs[e][m])(nbs[e][n])}]\bm Y_{nbs[e][n]}(\bm x)$ \Comment{Eq. \ref{eq:dSm}}
\State $\frac{\partial \rho_e(\bm x)}{\partial \bm x_{nbs[e][m]}} \mathrel{-}= \beta \left[S_{nbs[e][m]}(\bm x)\right]^{\beta-1}\frac{\partial S_{nbs[e][m]}}{\partial \bm x_{nbs[e][n]}}$ \Comment{Eq. \ref{eq:grad_rho}}
\EndFor
\EndFor
\EndFor

\For{$e \gets 0$ to $N^e$} \Comment{calculate $\frac{\partial c}{\partial \bm x_{nbs[e][m]}}$, not parallelizable}
\For{$m \gets 0$ to $k$}
\State $\frac{\partial c}{\partial \bm x_{nbs[e][m]}}\mathrel{+}=\frac{\partial{c}}{\partial \Tilde{\rho}_e}\frac{\partial{\Tilde{\rho}_e}}{\partial \rho_e}\frac{\partial \rho_e}{\partial \bm x_{nbs[e][m]}}$ \Comment{\ref{eq:d_all}}
\EndFor
\EndFor
\end{algorithmic}
\label{alg:dc_dx}
\end{algorithm}

\begin{algorithm}
\caption{Topology optimization with differentiable Voronoi diagrams}
\textbf{Input:} Volume fraction $f$, Number of Voronoi points $N^c$, Metric tensor $\bm D$\\
\textbf{Output:} Density field $\Tilde{\rho}$
\begin{algorithmic}[1]
\State Initialize positions of Voronoi points $\bm x$ randomly in a coarse grid
\State Iteration counter $i=0$
\State Design variable change $\Delta=1$
\State Projection parameter $\beta=1$
\While{$i<i_{max}$ and $\delta>1e-4$}
\State Calculate $\rho$ \Comment{Eq. \ref{eq:rho}, Algorithm \ref{alg:density}}
\State Density projection $\Tilde{\rho} \leftarrow \rho$
\State Compute compliance $c$   \Comment{Eq. \ref{eq:topo} and Eq. \ref{eq:topo_stiffness}}
\State Calculate $\frac{\partial c}{\partial \bm D}$,$\frac{\partial c}{\partial \bm x}$,$\frac{\partial V}{\partial \bm x}$,$\frac{\partial V}{\partial \bm D}$    \Comment{Eq. \ref{eq:d_all}, Algorithm \ref{alg:dc_dx}}
% \If{i \% $\lambda$==0}
%     \State $\gamma=2\gamma$
% \EndIf
\State Update $\bm D$ and $\bm x$ with MMA optimizer
\State $\Delta=\max\{\max\limits_{\forall m} (|\bm D_m^i-\bm D_m^{i-1}|), \max\limits_{\forall m} (|\bm x_m^i-\bm x_m^{i-1}|)\} $
\State $i=i+1$
\EndWhile
\end{algorithmic}
\label{alg:overall}
\end{algorithm}

\section{Implementation}
\label{sec:implementation}

\paragraph{Neighbor Search}
\begin{figure}
    \centering
    \includegraphics[width=0.45\textwidth]{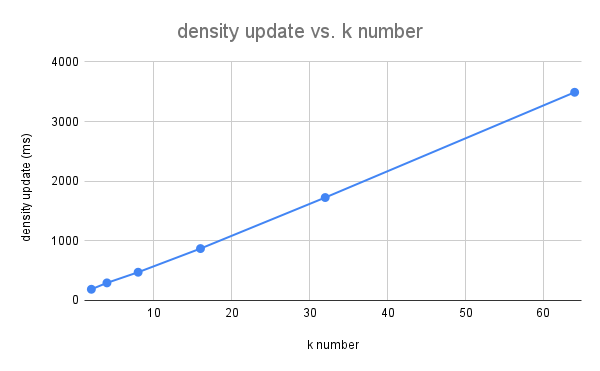}
    \includegraphics[width=0.45\textwidth]{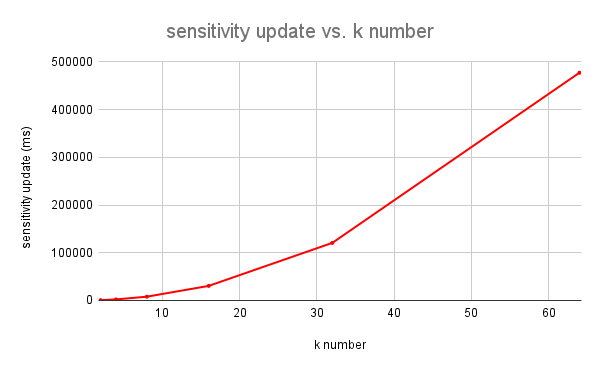}
    \caption{Computational cost density update and sensitivity update VS. $k$ number in a $1024x1024$ grid with 256 Voronoi points.}
    \label{fig:k_statistics}
\end{figure}
The computation cost is high for calculating the gradients with respect to $\bm x$ and $\bm D$, since it involves double loops as shown in (\ref{eq:grad_rho})(\ref{eq:dSm}) as both $m$ and $n$ need to iterate from 1 to $N^c$. The computation cost is therefore $O((N^c)^2)$ for each $\frac{\partial \rho_e}{\partial \bm x}$ and $\frac{\partial \rho_e}{\partial \bm D}$. However, only limited Voronoi points influence a grid element to form a high-density area as the softmax function finds the minimum distance to a given grid point and the edges are formed in areas where the minimum distances to multiple Voronoi points are similar. Therefore, the computation cost can be reduced to $O(k^2)$ if we only loop through $k$ Voronoi points closest to the grid element, and the added cost for searching k-nearest neighbor is $O(k\log N)$. Thus, we find k-nearest neighbors every time before we compute the density formed by the Voronoi points and only use the k-nearest neighbors also to compute the gradients. \gv{The sensitivity calculation is the performance bottleneck of our current implementation. As shown in Figure \ref{fig:k_statistics}, there is a linear relationship between number $k$ and computational cost of Voronoi density calculation (Algorithm \ref{alg:density}) and a quadratic relationship between number $k$ and Voronoi gradients calculation (Algorithm \ref{alg:dc_dx}), with significantly more time cost on the gradient calculation part. We use $k$ in the range from 7 to 20 for numerical examples to accommodate examples with high anisotropy.}

\paragraph{Workflow}
Overall our method has a similar optimization procedure as that of SIMP. However, our method uses Voronoi design variables $\bm D_m$ and $\bm x_m$ instead. We solve the continuous optimization problem defined by (\ref{eq:topo}) using a gradient-based, iterative optimization scheme. Four major steps are computed in each iterative step. First, we need to calculate the density $\rho$ based on the positions $\bm x_m$ and the metric tensors $\bm D_m$, as shown in Algorithm \ref{alg:density}. Second, we perform the FEM solve using an MGPCG (multi-grid preconditioned conjugate gradient) solver to obtain displacement $\bm u$ of each element and calculate the corresponding compliance $c$. Third, we compute the gradients of compliance $c$ and volume $V$ with respect to $\bm D_m$ and $\bm x_m$. \gv{For clarification, we give the pseudo-code of calculating $\frac{\partial c}{\partial \bm x}$ using neighbor search in Algorithm \ref{alg:dc_dx}, the other gradients are calculated similarly.} Last, we update the design variables using the method of moving asymptotes method (MMA) \cite{Svanberg1987TheMO,dumas2018mma}. The pseudo-code of our overall workflow is given in Algorithm \ref{alg:overall}.

\section{Numerical results}
\label{sec:results}

We discuss the numerical results obtained using our approach. For all the examples, we assume isotropic elasticity specified by Young's modulus $E=1$ and Poisson's ratio $\mu=0.3$. In the optimization, the lower bound of Young's modulus $E_{\mathrm{min}}$ is $1e-9$. 
%The continuation parameter $\lambda$ is set to 50. 
We use $\beta=50$ for the Voronoi representation.  \rv{We conduct all of our numerical examples on a regular lattice grid with square elements in 2D and cube elements in 3D. Ideally, we assume that we have an unlimited resolution of the background grid. However, with the limited computational resource, the finite element size should be at most half of the thickness of Voronoi edges.} Our experiments consist of four parts, including the ablation tests (Section \ref{sec:results_verification}), standard tests (Section \ref{sec:results_standard}), biomimetic applications (Section \ref{sec:results_bio}), free-boundary examples (Section \ref{sec:results_free_boundary}), and three-dimensional results (Section \ref{sec:results_3d}).

%In \ref{sec:results_verification}, we conduct an initial investigation on the design space enabled by optimizing $\bm x$, $\bm D$. In \ref{sec:results_optimization}, we show some high-resolution results optimizing both $\bm x$ and $\bm D$, with potential applications in bionic structures. In \ref{sec:results_free_boundary}, we present some unique structures only achievable after using our free boundary formulation.

% \begin{figure}[t]
%     \includegraphics[width=0.33\textwidth]{}
%     \includegraphics[width=0.33\textwidth]{}
%     \includegraphics[width=0.33\textwidth]{}
%     \caption{Optimizing only the positions of Voronoi points for Cantilever Beam. The left image is the initial density distribution, the middle image is the optimized density distribution at iteration 249 and the right image is the plot of compliance and volume vs iteration.}
%     \label{fig:cb_x}
% \end{figure}

\subsection{Ablation tests}
\label{sec:results_verification}

\begin{figure}
     \centering
     \begin{subfigure}[b]{0.3\textwidth}
         \centering
         \includegraphics[width=\textwidth]{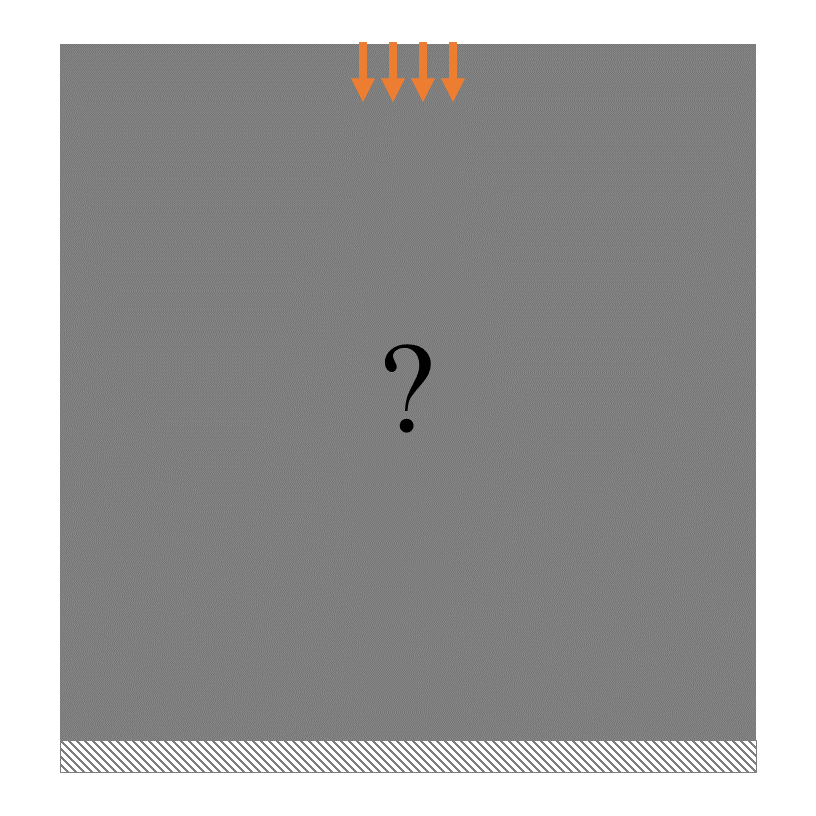}
         \caption{}
         \label{fig:push_down_setup}
     \end{subfigure}
     \hfill
     \begin{subfigure}[b]{0.3\textwidth}
         \centering
         \includegraphics[width=\textwidth]{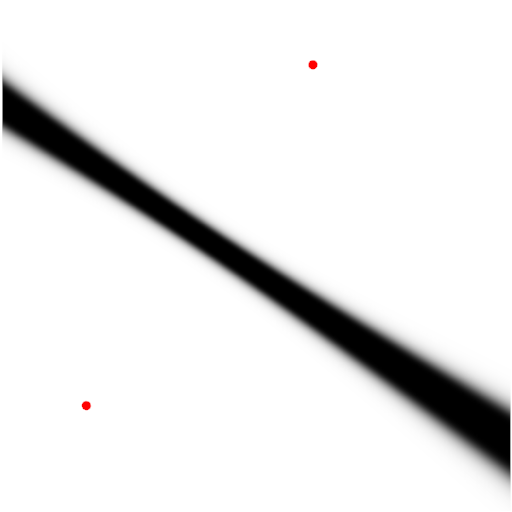}
         \caption{}
         \label{fig:push_down_0}
     \end{subfigure}
     \hfill
     \begin{subfigure}[b]{0.3\textwidth}
         \centering
         \includegraphics[width=\textwidth]{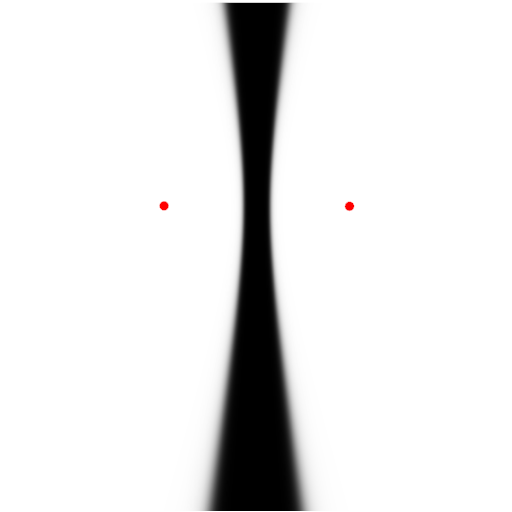}
         \caption{}
         \label{fig:push_down_122}
    \end{subfigure}
    \hfill
    \caption{(a) Setup for the pushdown experiment. (b) Initial density distribution. (c) Density distribution after optimization. The red points indicate the positions of two Voronoi points.}
    \label{fig:push_down}
\end{figure}

\begin{figure}
     \centering
     \begin{subfigure}{0.3\textwidth}
         \centering
         \includegraphics[width=\textwidth]{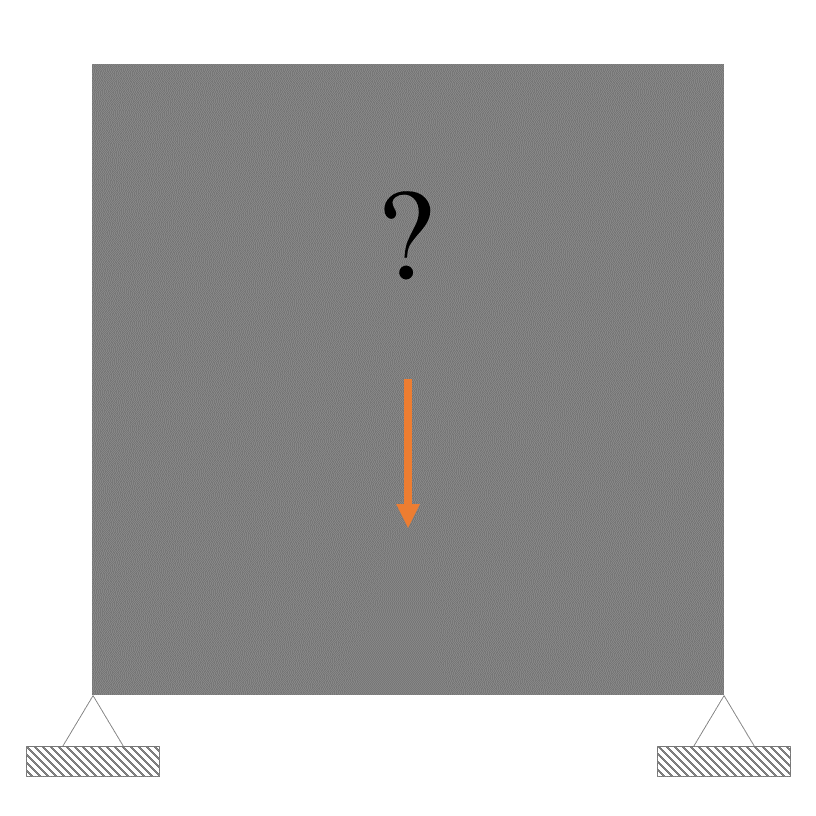}
         \caption{}
         \label{fig:middle_setup}
     \end{subfigure}
     \hfill
     \begin{subfigure}{0.3\textwidth}
         \centering
         \includegraphics[width=\textwidth]{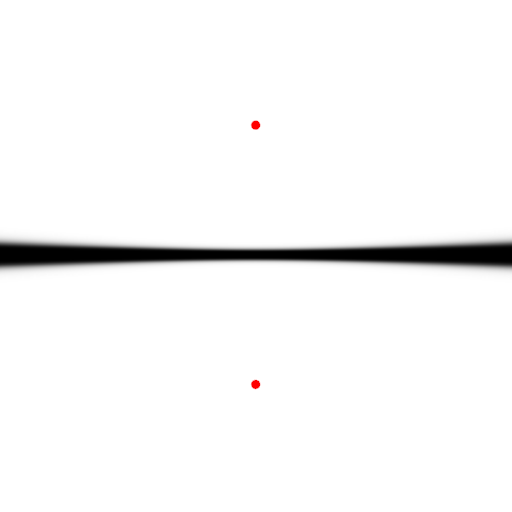}
         \caption{}
         \label{fig:middle_0}
     \end{subfigure}
     \hfill
     \begin{subfigure}{0.3\textwidth}
         \centering
         \includegraphics[width=\textwidth]{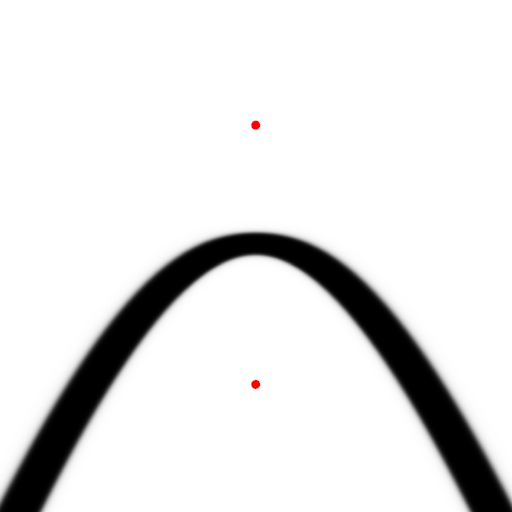}
         \caption{}
         \label{fig:middle_249}
    \end{subfigure}
    \hfill
    \caption{(a) Setup for the middle drag experiment. (b) Initial density distribution. (c) Density distribution after optimization. The red points indicate the positions of two Voronoi points.}
    \label{fig:middle}
\end{figure}

We first conduct two ablation tests to verify the roles of position $\bm x_m$ and metric tensor $\bm D_m$ separately. Both experiments were carried out with two Voronoi points on a $512\times512$ grid.
%
%As shown in Figure \ref{fig:push_down}, we set up the first experiment to verify the effectiveness of optimizing the positions of the control points in a $512\times512$ grid. 
We first test the optimization of $\bm x_m$ (see Figure \ref{fig:push_down}). 
The boundary condition is set as a distributive force within a narrow region on the top and the bottom nodes being fixed. We initialize the position of two Voronoi points randomly as shown in Figure (\ref{fig:push_down_0}), where a tilted beam is formed, not connected with any fixed nodes on the bottom. The metric tensor $D$ for each Voronoi point is set to be $80$ multiplying an identity to ensure an appropriate edge thickness. 
%so that the initial volume is not too far away from the target volume, $0.1$. 
After convergence, as shown in Figure (\ref{fig:push_down_122}), the beam is rotated to the middle of the domain by the moving control points, connecting the force load and the fixed base to minimize the structure's compliance.
We set up a similar optimization problem as in Figure \ref{fig:middle_setup} to verify the role of $\bm D_m$. The bottom-left and the bottom-right corners are fixed and an external load is applied in the center. We initialize two Voronoi points with their position fixed and only optimize for their metric tensor.
%the metric tensor $\bm D_m$ for the upper and lower Voronoi points are both set to $[200, 0, 0, 200]$. 
The program obtains $\bm D_1=[282, 1.4e-4; 1.4e-4, 89]$ for the lower point and $\bm D_2=[228, -1.3e-4; -1.3e-4, 106]$ for the higher point after convergence. The optimized design forms an arch to minimize compliance while satisfying the pre-defined target volume constraints.
The two tests jointly show that both design variables are effective in guiding the structure to evolve to a low-compliance status.

\subsection{Standard tests}
\label{sec:results_standard}
We optimize $\bm x$ and $\bm D$ for cellular cantilever beam structures with minimum compliance. 
%We first show a cantilever beam example with its variation
%, and then we show bionic structures with insect wings and a femur bone example in irregular domains.
\begin{figure}
    \center
    \begin{subfigure}{0.4\textwidth}
        \center
        \includegraphics[width=\textwidth]{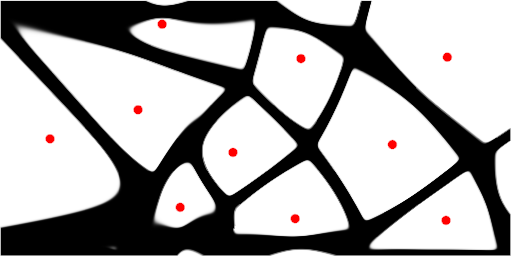}
        \caption{}
        \label{fig:cb_v}
    \end{subfigure}
    \hfill
    \begin{subfigure}{0.4\textwidth}
        \center
        \includegraphics[width=\textwidth]{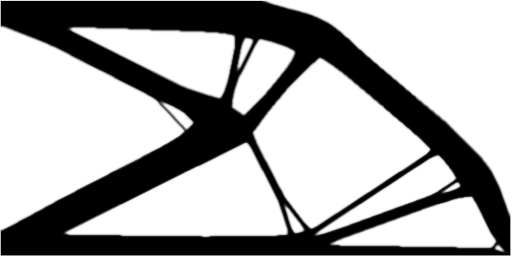}
        \caption{}
        \label{fig:cb_SIMP}
    \end{subfigure}
    \hfill\\
    \begin{subfigure}{0.7\textwidth}
         \includegraphics[width=\textwidth]{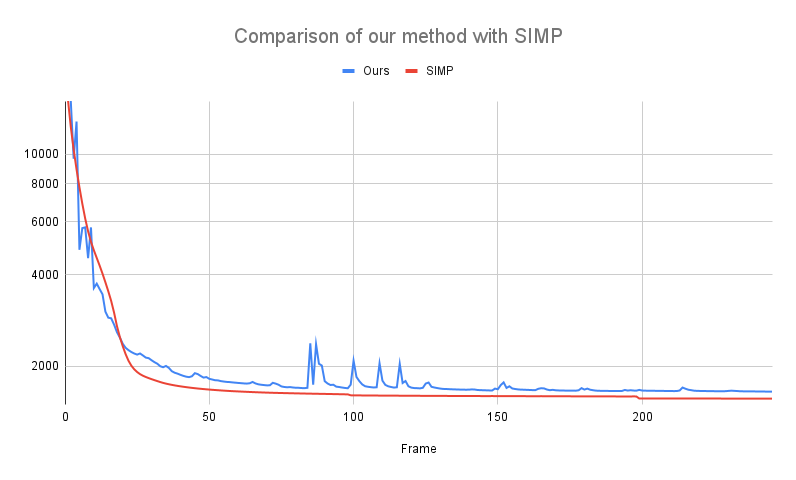}
         \caption{}
         \label{fig:cb_comp_plot}
    \end{subfigure}
    \caption{Optimization results of two-dimensional cantilever beams. (a) Our result. (b) Result obtained using SIMP. (c) The objective convergence comparison.}
    \label{fig:cb_comp}
\end{figure}
%The first example is a standard cantilever beam. 
The optimized design after 245 iterations is shown in Figure \ref{fig:cb_comp}(a). We first randomly initialize 18 points in a $6\times3$ coarse grid of a $512\times256$ domain with an initial $\bm D=[150,0\rv{;~}0,150]$. Our optimized structure resembles that of a standard topology optimization result, Figure \ref{fig:cb_comp}(b), as the long beams are formed both on the bottom and top, connecting the fixed boundary with the loads. We did not visualize Voronoi points outside the domain, which also contributes to the structure. 
%The internal structures are also similar in terms of the overall orientations of the beams. 
%The similarities in the overall structure also lead to similarly low compliance. 
The compliance of our structure is $5.5\%$ above the one obtained by the standard method. 
%, while maintaining a strictly constrained design space. 
We conjecture that the slightly larger compliance is in part due to the waste materials formed on the top right corner, which is not contributing to lower compliance and the constrained design space for cellular patterns. %However, the beams at that area are optimized to be thinner compared to other areas. 
We show a comparable optimization convergence performance to the standard SIMP method (see Figure~(\ref{fig:cb_comp_plot})), in which both methods converge within 150 frames. \rv{We conjecture the spikes in the objective curve to be caused by a relatively significant change in density distribution when the positions/metric tensors change mildly in value. These jitters can be mitigated by further decreasing the move limits of the optimization parameters.}

\begin{figure}
    \centering
    \includegraphics[width=0.49\textwidth]{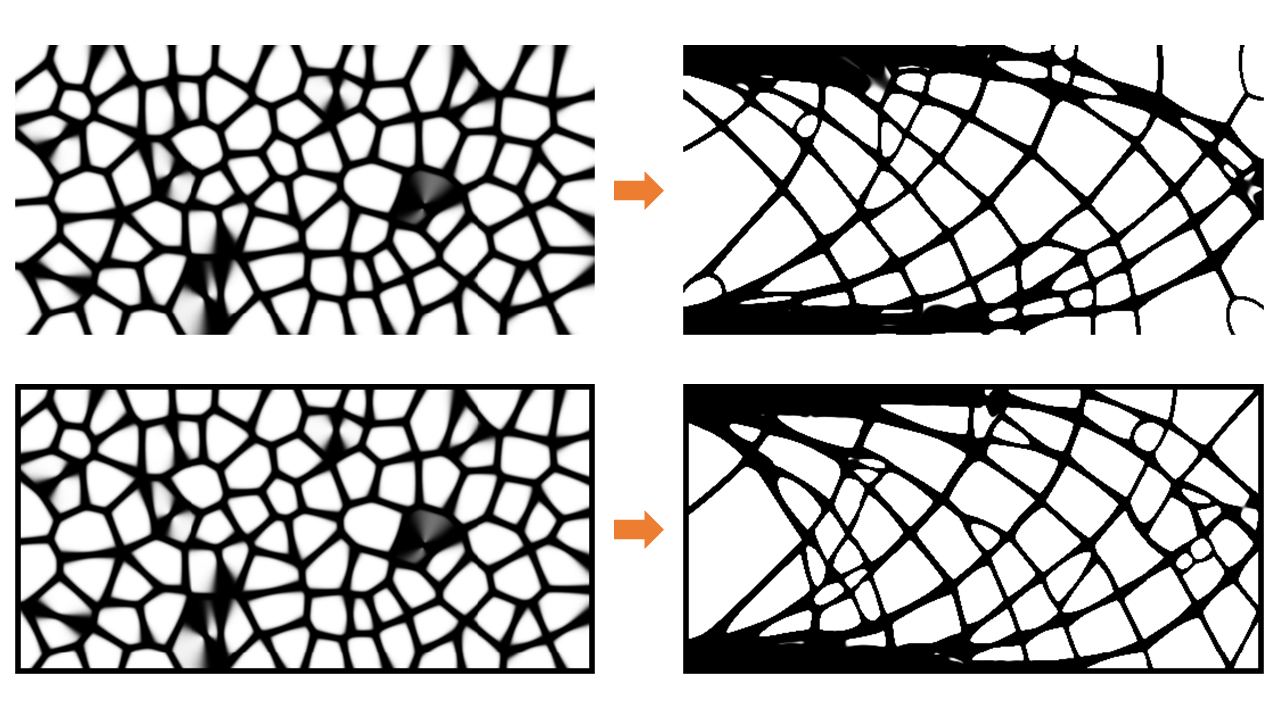}
    \includegraphics[width=0.49\textwidth]{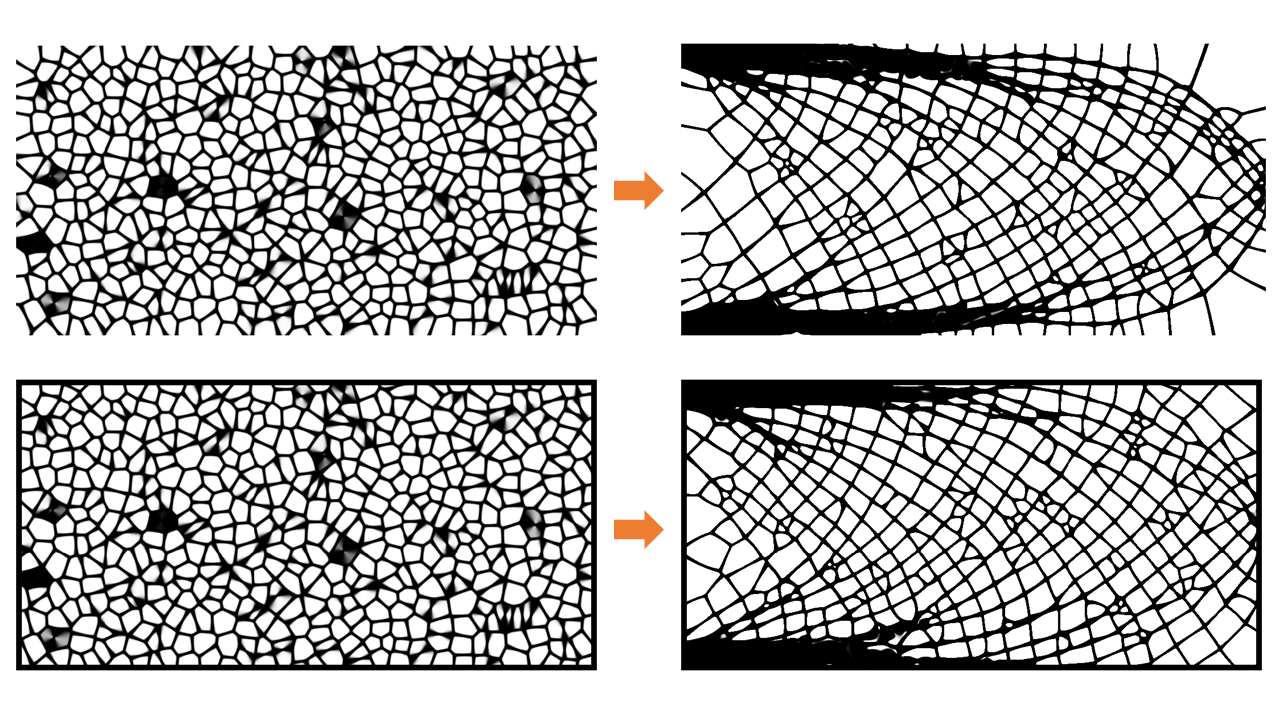}
    \caption{\rv{Topology optimization of Voronoi structures with distributed force load on the right side. The top row shows the results without density on the boundary while the bottom row shows the result that has one layer of density on the boundary. The left side is the result formed by 98 Voronoi points with resolution $512\times256$ while the results on the right side are formed by 512 Voronoi points with a grid resolution of $1024\times512$.}}
    \label{fig:cb_frame}
\end{figure}

\begin{figure}
    \centering
    \includegraphics[width=0.8\textwidth]{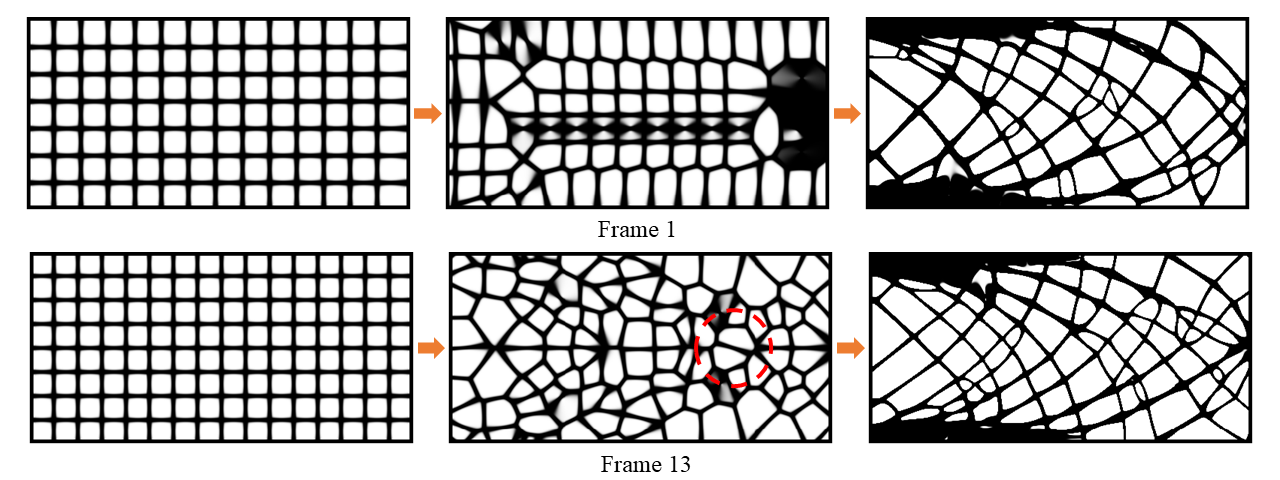}
    
    \caption{\rv{Optimization result of the framed cantilever beam with $14\times7$ and $16\times8$ points respectively.}}
    \label{fig:cb_frame_regular}
\end{figure}

The second example is a framed cantilever beam with a distributed force load. The nodes on the left side are all fixed, and downward-pointing forces are exerted in the middle region on the right side, occupying the middle 20 percent of the total height. The rectangular frame of the $1\times0.5$ domain is fixed to be filled with solid materials with a beam thickness of 0.01 \rv{(in the framed cases)}. The target volume is set to be 0.35. The optimization results with the different number of initial points are shown in Figure \ref{fig:cb_frame}. From left to right, we observe that a more distributed structure is formed with more Voronoi points. \rv{All} optimized structures form axial beams that traverse from the left top corner to the right middle, similarly, from the left bottom corner to the right middle. Additionally, both form the dark ``anchor" regions on the top left and bottom left corners. These features demonstrate the spatially adaptive nature of our method, which are also the keys to minimizing compliance in this boundary setup. \rv{Comparing the upper row without density on the frame and the lower row with density on the frame, the overall structures look similar except that there is less material connecting to the upper and lower corner in the examples without frame. For the results in the $512\times 256$ grid, 17 Voronoi points move outside of the domain in both cases. For the results in $1024\times 512$ grid, 83 Voronoi points move outside of the active cell domain for the case where density fills the frame while 98 points move outside of the domain for the unframed case.
\\
In Figure \ref{fig:cb_frame_regular}, we initialize the points in the centers of the coarse grids. Since the stress distribution is symmetric across the horizontal axis, the optimization initialization of points in the $14\times7$ coarse grid exhibits an asymmetrical pattern as early as frame 1. When we optimize the structure with the initialization of points in the $16\times8$, it preserves a symmetric structure until frame 13. Also, we observe that starting from frame 13, the volume constraint is better satisfied. So we conclude that the optimizer prioritizes satisfying volume constraint than strictly following gradient from that point on.}

\begin{figure}
    \includegraphics[width=\textwidth]{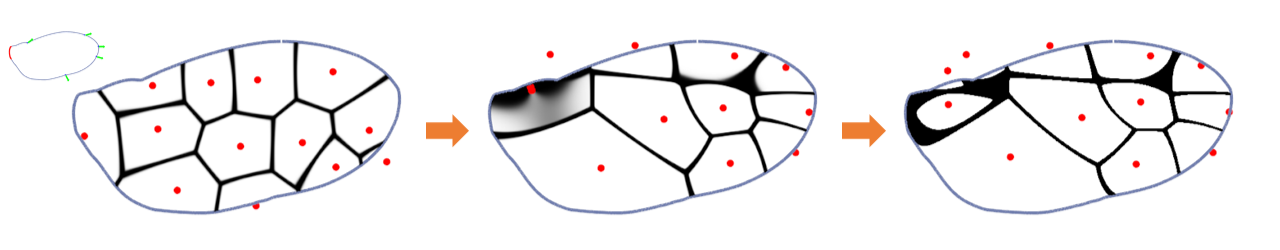}
    \caption{
    % a) Boundary condition for the diptera wing example. The darker blue contour indicates the boundary of the design domain. The red nodes on the left are the fixed nodes. The forces are applied to five locations indicated by blue circles with green bars attached, which visualizes the direction and the strength of the forces. 
    Evolution of Voronoi points and corresponding density for the Diptera wing example. From left to right at iteration 1, 100 and 299 respectively. The boundary condition is illustrated on the inset picture on the top left\rv{, where the red points indicate the fixed nodes, blue points indicate the nodes that receive forces and the green lines indicate the force directions.}}
    \label{fig:diptera}
\end{figure}

\begin{figure}
    \centering
    \includegraphics[width=0.95\textwidth]{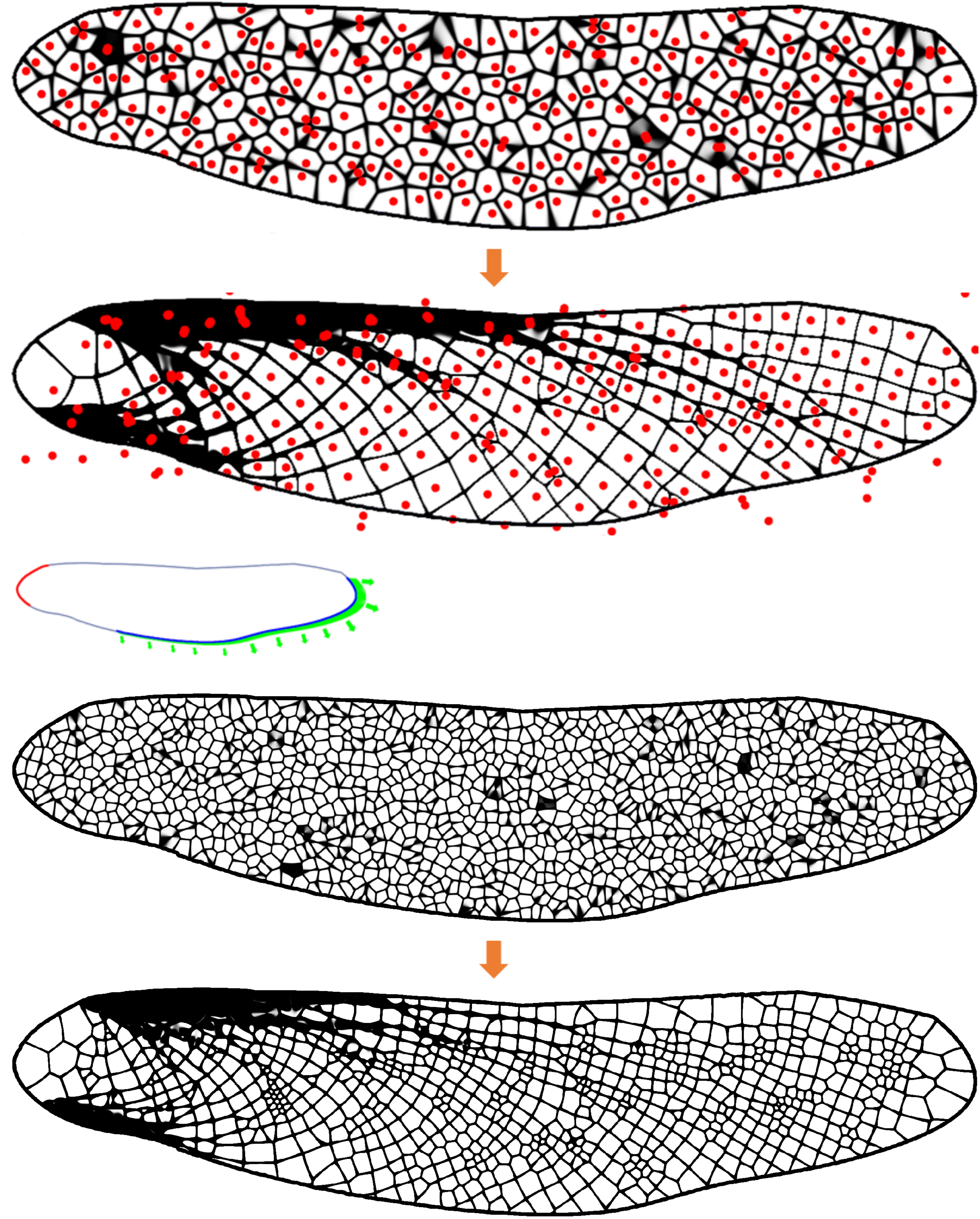}
    \caption{The evolution result of odonata front wing. The boundary condition is illustrated in the inset picture on the middle left. Top row: The optimized design with 290 Voronoi points in a $1024\times256$ domain. Bottom row: The optimized design with 1116 Voronoi points in a $2600\times650$ domain.}
    \label{fig:odonata}
\end{figure}

\begin{figure}
    \centering
    \includegraphics[width=0.8\textwidth]{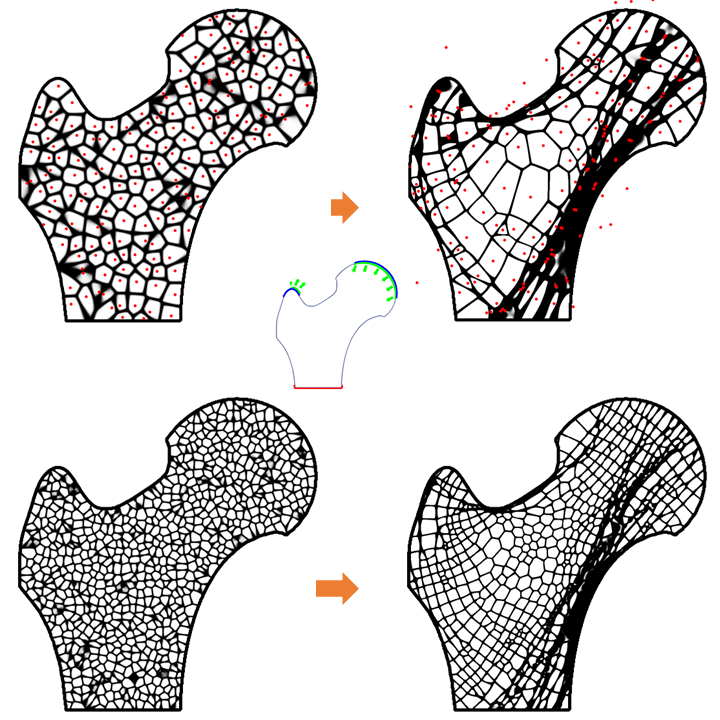}
    \caption{Evolution of cellular structures for the bone example. Middle: Boundary condition for the bone example. Similar to the Diptera example, the darker blue contour indicates the boundary of the design domain. The red nodes on the bottom are the fixed nodes. The distributed forces are applied to the two locations indicated by blue circles with green bars attached, which visualize the direction and the strength of the forces. Top row: Left is the initialization of 196 Voronoi points \rv{on a $1024\times 1024$ grid} and Right is the optimized result after 249 iterations. Bottom row: Left is the initialization of 773 Voronoi points \rv{on a $1024\times 1024$ grid} and Right is the optimized result after 249 iterations.}
    \label{fig:bone}
\end{figure}

\subsection{Biomimetic structures}
\label{sec:results_bio}
To demonstrate the efficacy of our approach in optimizing biomimetic cellular structures, we optimize the structure within a Diptera wing contour as in Figure \ref{fig:diptera}. We randomly initialize 13 Voronoi points in a $6\times3$ grid with $\bm D=[400,0\rv{;~}0,400]$.
%If the initialized point is within the boundary, we preserve the point; otherwise, we discard the points. 
The target volume fraction is set to 0.15. \rv{For processing the irregular boundary here, we first convert a silhouette of the Diptera wing shape into a level set field using fast marching algorithm \cite{sethian1999fast}. Then, we mark the cells with negative level set values as active and mark the cells with positive level set values as inactive cells, which we set $\rho_e = 0$ and do not optimize the density distribution. The nodes that have active neighbor cells and inactive neighbor cells are marked as boundary nodes and we apply boundary forces on them. The directions of the boundary forces are partially determined by the gradient of the level set as well.} As shown in Figure \ref{fig:diptera}, the Voronoi points are distributed in the design domain evenly and the initial Voronoi tessellation is similar to a centroidal Voronoi tessellation. As optimization proceeds, however, the Voronoi partitions are more elongated, which conforms to the force direction. Also, the beams on the left side connecting to the fixed points are relatively thicker than the beams on the right side. Note that although some regions with diffused density appear as two points closer to each other, they separate naturally in the later iterations after the stronger filter strength leads to a more binary density pattern. A denser cellular pattern in an Odonata front wing is shown in Figure \ref{fig:odonata}, where 290 points with an initial $\bm D=[1300,0\rv{;~}0,1300]$ and 1116 points with an initial $\bm D=[2600,0\rv{;~}0,2600]$ are optimized after 249 iterations. \rv{The irregular boundary is set similarly in this example except that the cells within a certain positive epsilon values are marked as boundary cells with $\rho_e=1$, and and the boundary nodes are in between boundary and inactive cells.} It can be observed that thick beams are formed towards the fixed nodes on the right side and there is a spectrum of variations in the edge thickness.

We also test our algorithm to optimize a bone's interior. As illustrated in Figure \ref{fig:bone}, we fix the bottom of the bone shape and add distributed pulling forces on the top left and distributed pushing force on the top right. The evolution of the topology optimization process is shown in Figure \ref{fig:bone} at selected frames. We can see that the overall density of distribution is reached as early as iteration 50, while a more binary and refined structure is achieved in the end. Our results are consistent with the characteristic femur bone structure with elongated cells and inhomogeneous density distribution across the entire domain. The result formed with 773 points has a more distributed density while maintaining the overall cellular directions. In this example, the structure formed with fewer points achieves a lower compliance \rv{102} than that of the structure formed by more points (115) which implies that a less even density distribution is more desirable under this specific boundary condition.

\begin{figure}
    \centering
    \includegraphics[width=0.243\textwidth]{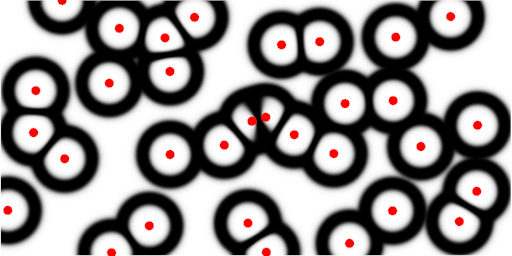}
    \includegraphics[width=0.243\textwidth]{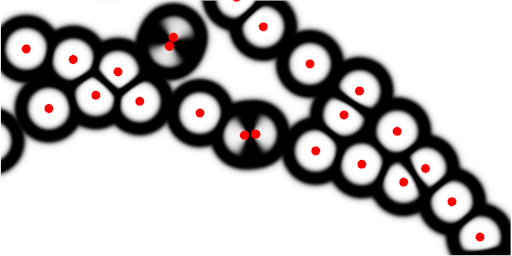}
    \includegraphics[width=0.243\textwidth]{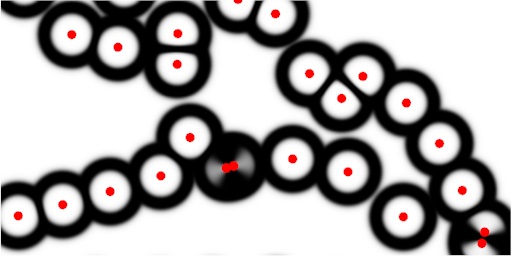}
    \includegraphics[width=0.243\textwidth]{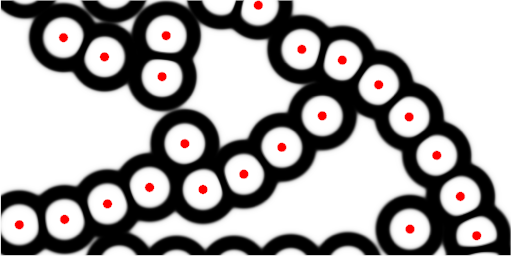}\\
    \includegraphics[width=0.243\textwidth]{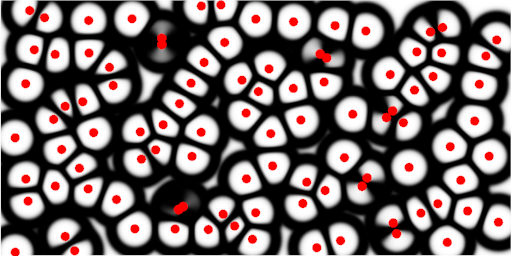}
    \includegraphics[width=0.243\textwidth]{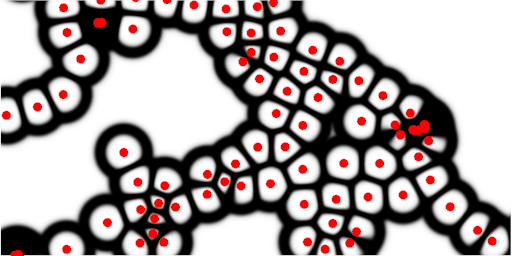}
    \includegraphics[width=0.243\textwidth]{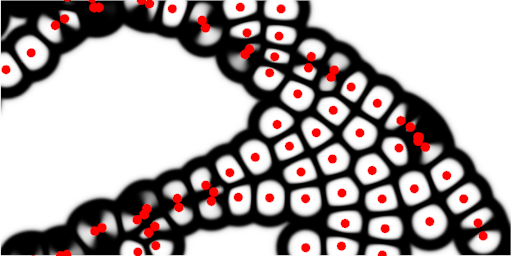}
    \includegraphics[width=0.243\textwidth]{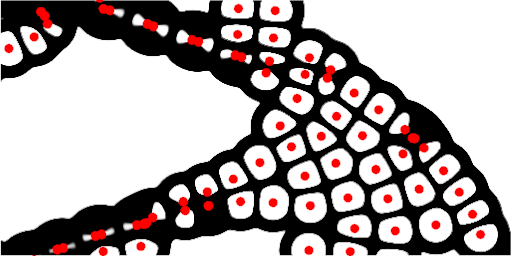}
    \caption{Cellular structure with free boundary, only optimizing x. From left to right, the columns are optimization results at iteration 0, 15, 50 and 249 respectively. The top row is the result formed by 32 Voronoi points while the results on the bottom are formed by 98 Voronoi points.}
    \label{fig:cb_f}
\end{figure}

\begin{figure}
    \centering
    \includegraphics[width=0.243\textwidth]{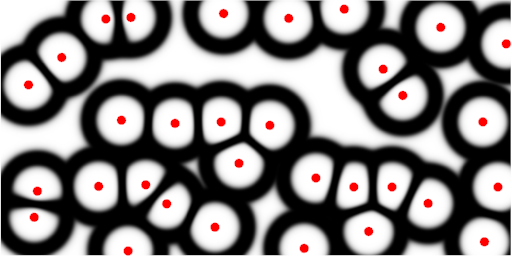}
    \includegraphics[width=0.243\textwidth]{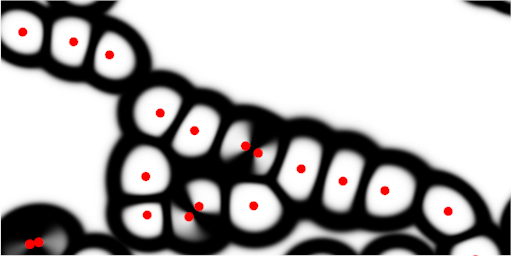}
    \includegraphics[width=0.243\textwidth]{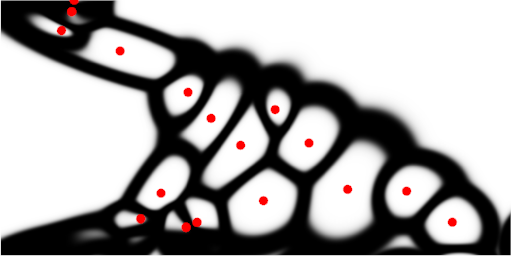}
    \includegraphics[width=0.243\textwidth]{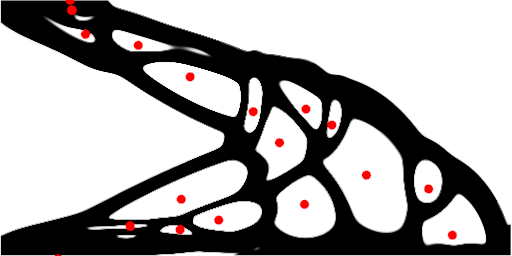}\\
    \includegraphics[width=0.243\textwidth]{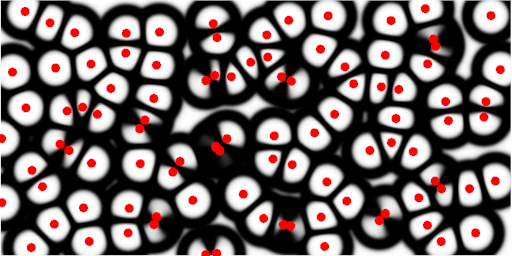}
    \includegraphics[width=0.243\textwidth]{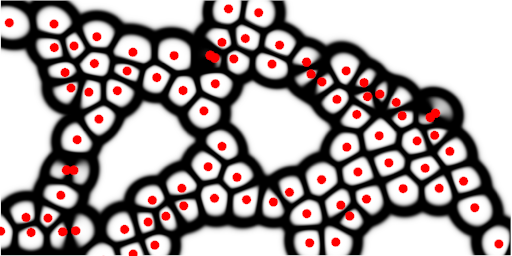}
    \includegraphics[width=0.243\textwidth]{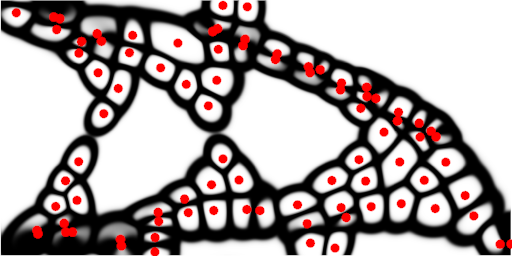}
    \includegraphics[width=0.243\textwidth]{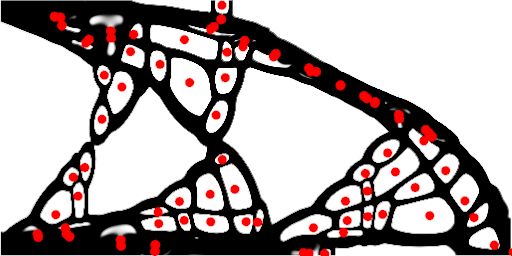}
    \caption{Anisotropic cellular structure with free boundary, optimizing both $\bm x$ and $\bm D$. From left to right, the columns are optimization results at iteration 0, 15, 50 and 249, respectively. The top row is the result formed by 32 points while the results on the bottom are formed by 98 points.}
    \label{fig:cb_f_d}
\end{figure}

\begin{figure}
    \centering
    \includegraphics[width=0.42\textwidth]{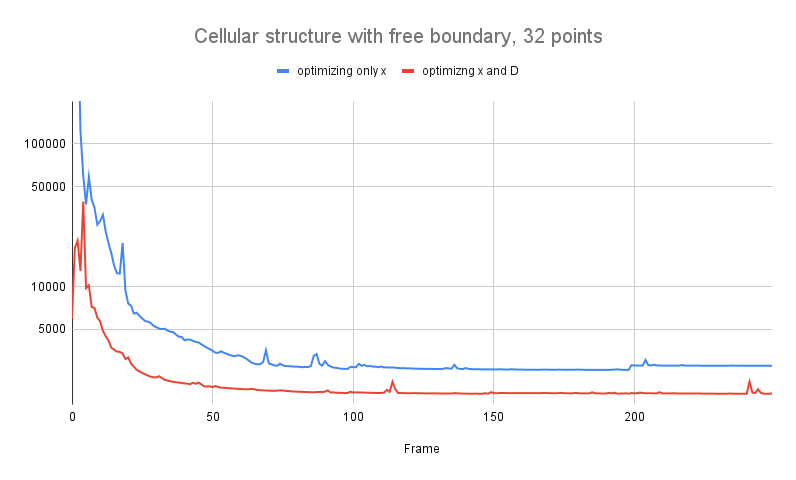}
    \includegraphics[width=0.42\textwidth]{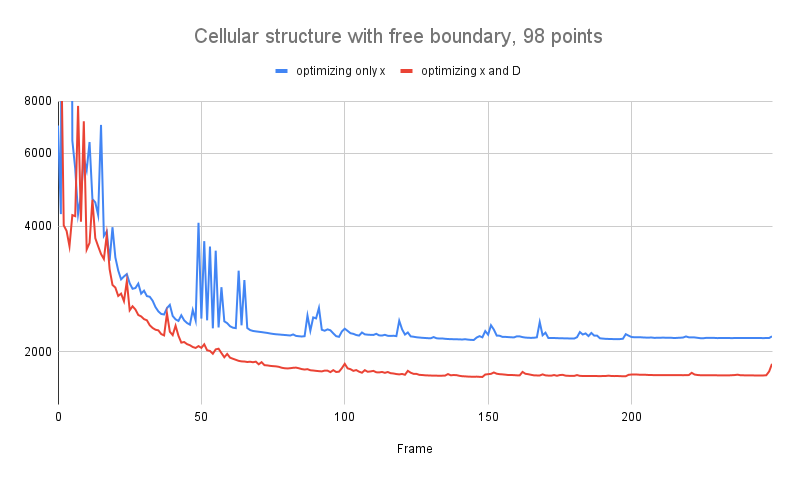}
    \caption{\rv{The comparison of compliance function curves of cellular structures with free boundary formed only optimizing $\bm x$ and optimizing both $\bm x$ and $\bm D$ with 32 and 98 Voronoi points respectively.}}
    \label{fig:cb_f_comp}
\end{figure}

\subsection{Foam structures with free boundary}
\label{sec:results_free_boundary}
Our method is also capable of optimizing foam structures with free boundaries, ergo, no need to connect cell walls all the way to the border of the design domain, thus forming cellular structures seen in nature such as cell clusters. In this subsection, we demonstrate the unprecedented cellular structure with/without enabling the anisotropic feature. All experiments are done with the standard Cantilever Beam boundary condition.

In Figure \ref{fig:cb_f}, we optimize only the positions of the Voronoi control points. On a $512\times256$ grid, we initialize 32 points and 98 points randomly in coarse grids $8\times4$ and $14\times7$, respectively. The initial $\bm D$ is set to $[250,0\rv{;~}0,250]$ for the experiment with fewer points, and $[350,0\rv{;~}0,350]$ for the experiment with more points. The value of $\epsilon_s$ is set to $1e-7$ to obtain an appropriate radius for the cell size. In the beginning, the cells spread out the entire domain, without a structure. After 15 iterations, cells start to cluster in order to form a long beam connecting the left fixed side with the force load on the right bottom corner. A more clear 2-beam structure forms at the 50th iteration, and final structures that more resemble a standard cantilever beam form in the end. Notice that since there is no structure connecting the force load and the fixed nodes in the beginning, the FEM solve does not converge. However, the optimization process is not obstructed by this issue.

With the same initial setup, but enabling the optimization of $\bm D$, we can obtain cellular structures with morphed cells and lower compliance \rv{as shown in Figure \ref{fig:cb_f_d}}. Similarly, the cells are dispersed randomly in a coarse grid at the start point and start to form beam structures with less distorted cells at iteration 15. The structure at iteration 50 consists of more distorted cells while maintaining the overall structure. In the end, thicker beams are formed on the bottom and across from the left-top to the bottom right corner for both cases.

\rv{In Figure \ref{fig:cb_f_comp}, we compare the compliance function curves of the cellular structures with free boundary formed only optimizing $\bm x$ and optimizing both $\bm x$ and $\bm D$ with 32 and 98 Voronoi points respectively. In both cases, the structures obtained by optimizing both $\bm x$ and $\bm D$ reach much lower compliance values.}

\rv{Furthermore, we compare the compliance of the structure with free boundaries and without free boundaries as shown in \ref{fig:cb_f_comp_2d}. In this case, the final compliance of the structure with free boundaries is higher than that without free boundaries, even though the structure with the free boundary removes the extra material in the upper right corner that doesn't contribute to lowering the compliance. This is the constraint of the free boundary structure as the cells need to form a closed cellular structure and there are no connecting cells with the fixed nodes on the left side under this volume constraint.}

\begin{figure}
    \centering
    \includegraphics[width=0.33\textwidth]{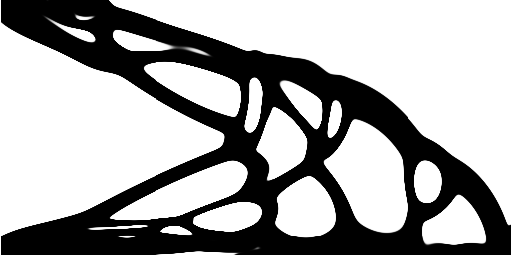}
    \includegraphics[width=0.33\textwidth]{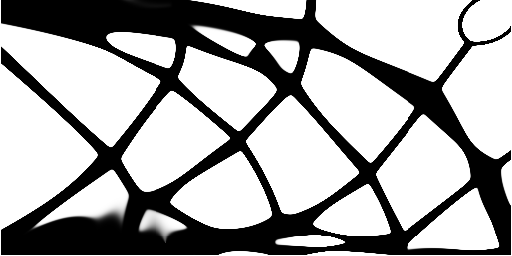}
    \includegraphics[width=0.33\textwidth]{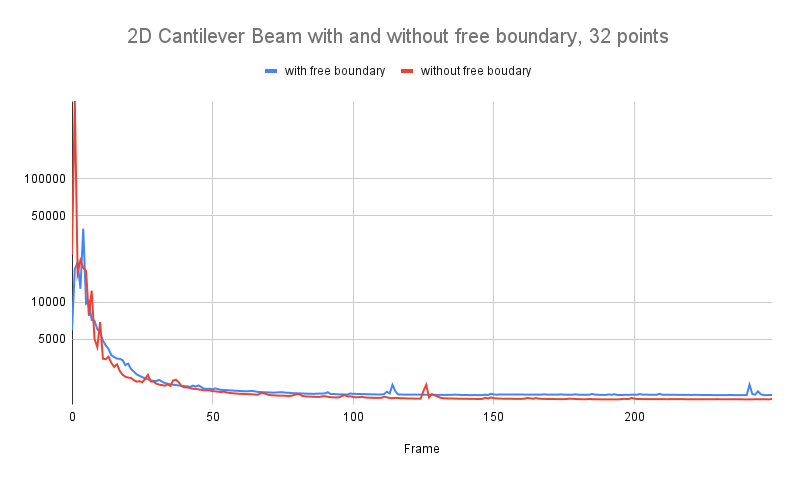}
    \caption{\rv{Comparison of Cantilever Beam with and without free boundary in 2D with 32 Voronoi points.}}
    \label{fig:cb_f_comp_2d}
\end{figure}

\subsection{3D result and performance}
\label{sec:results_3d}
\begin{figure}[ht]
    \centering
    \includegraphics[width=0.8\textwidth]{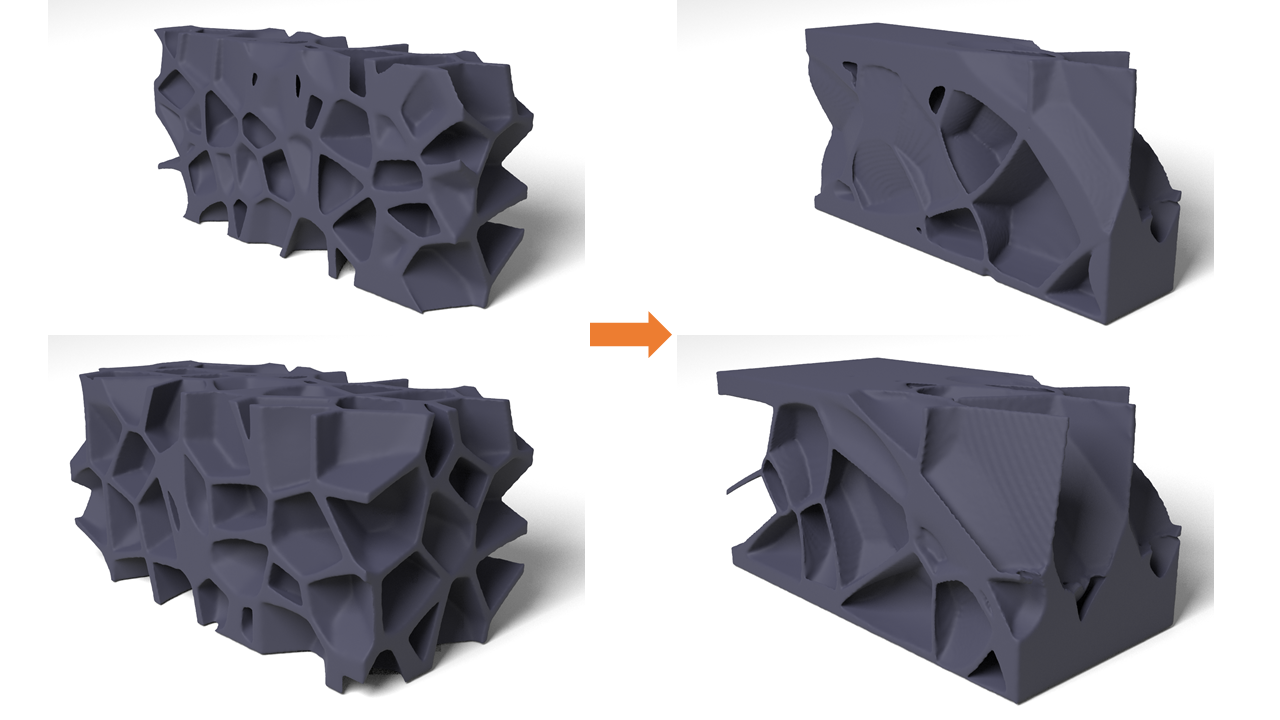}
    \caption{Rendered 3D Cantilever beam results. Left column is the structure upon initialization, right column is the optimized structure at iteration 150. The top row is the cross-sectional view. }
    \label{fig:cb_3d}
\end{figure}

\begin{figure}[ht]
    \centering
    \includegraphics[width=0.8\textwidth]{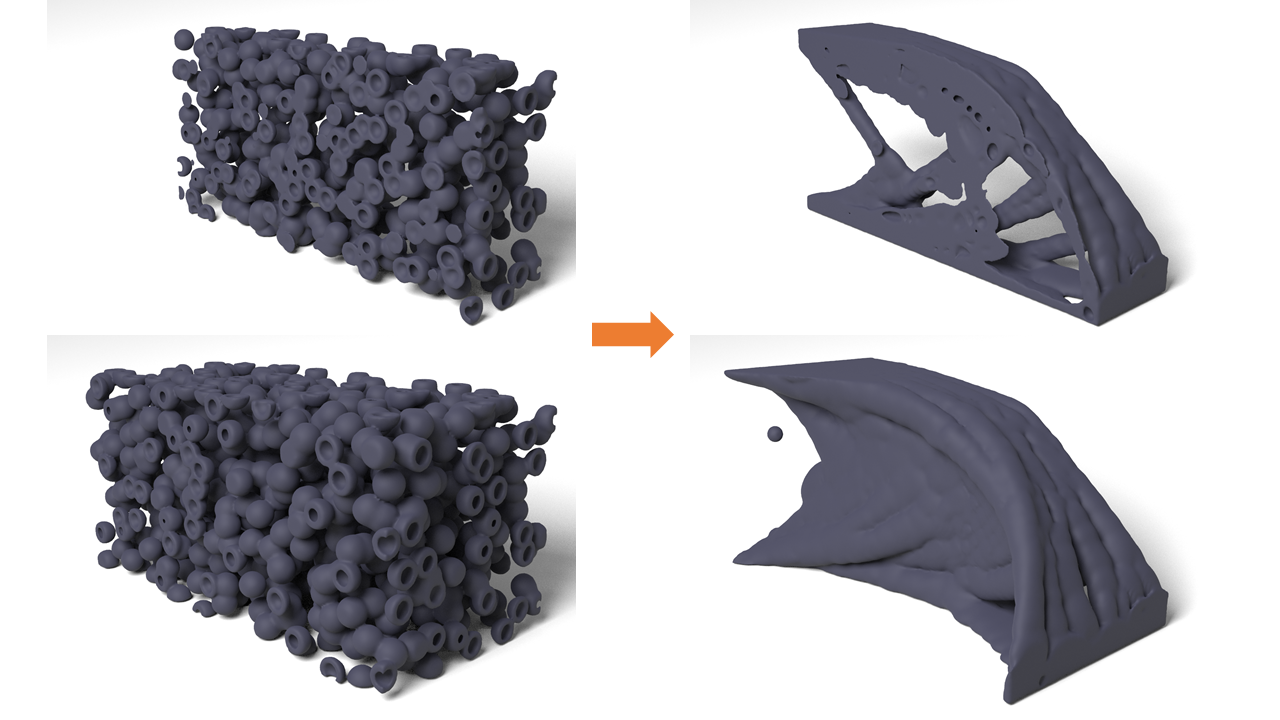}
    \caption{Rendered 3D Cantilever beam with free boundary results. Left is the structure upon initialization, right is the optimized structure at iteration 100. The top row is the cross-sectional view.}
    \label{fig:cb_3d_f}
\end{figure}

\begin{figure}
    \centering
    \includegraphics[width=0.33\textwidth]{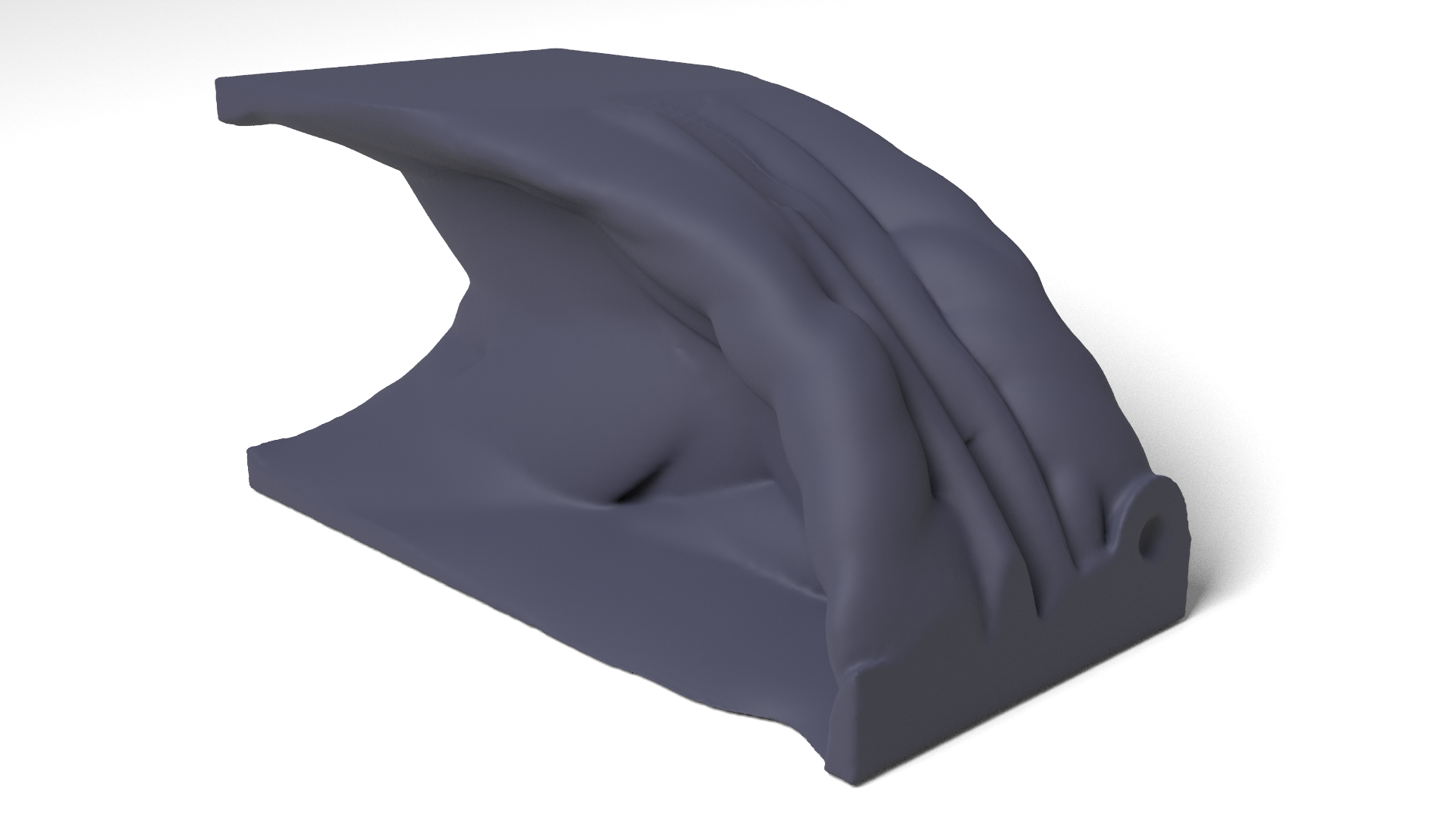}
    \includegraphics[width=0.33\textwidth]{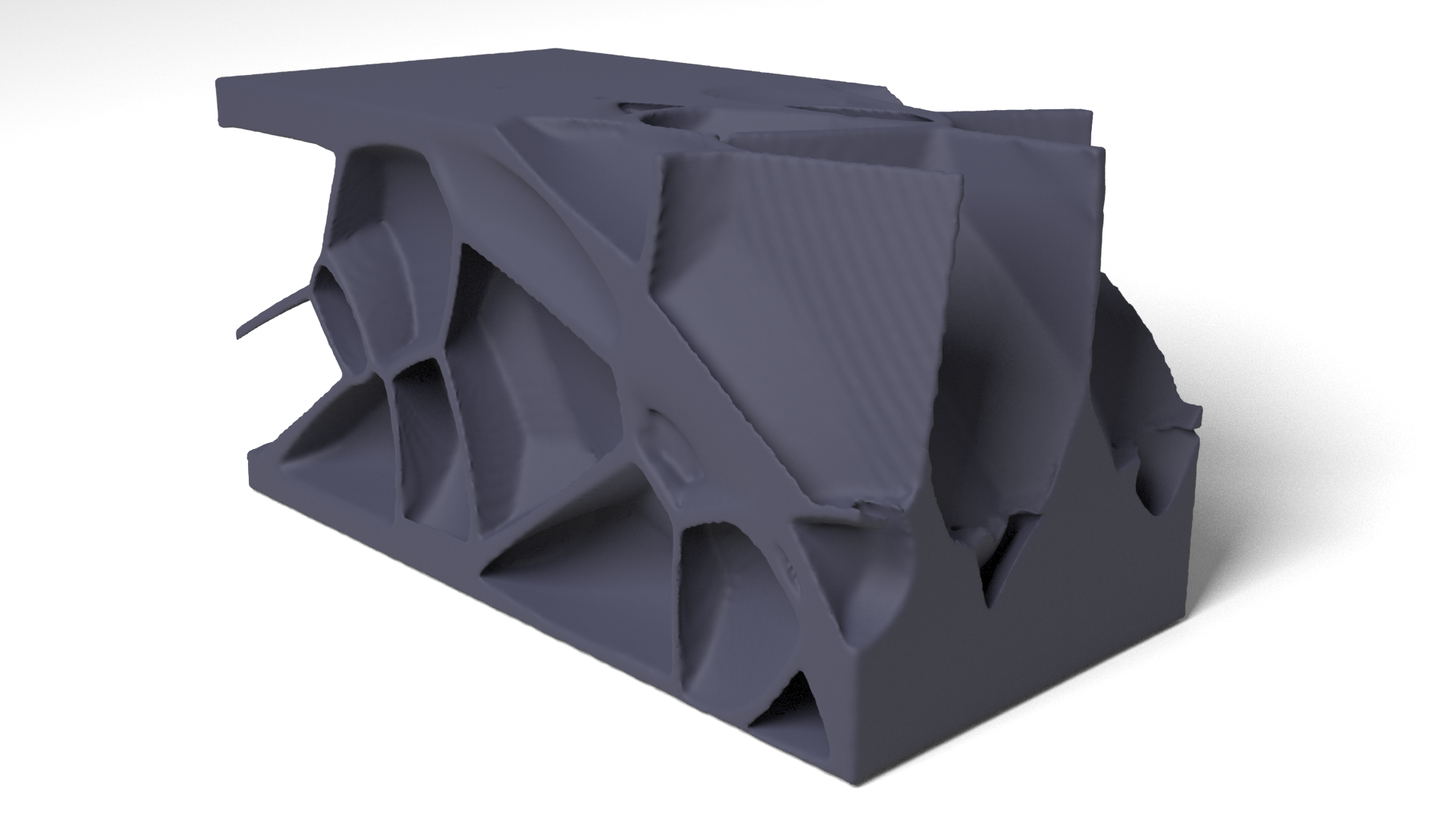}
    \includegraphics[width=0.33\textwidth]{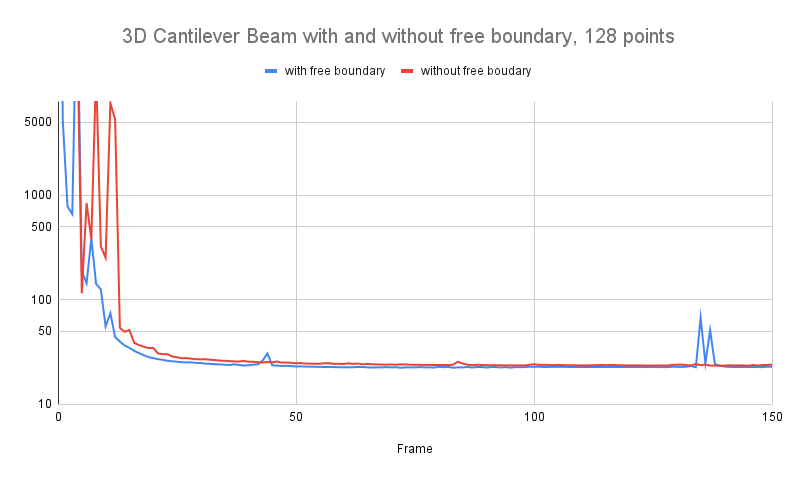}
    \caption{\rv{Comparison of Cantilever Beam with and without free boundary in 3D with 98 points.}}
    \label{fig:cb_f_comp_3d}
\end{figure}

Our method can be easily extended to 3D. Therefore we also demonstrate the 3D cantilever beam in Figure \ref{fig:cb_3d}. Here, we initialized 128 points inside a $256\times128\times128$ grid with an initial $\bm D=300I$. Large walls are formed on the bottom and across the top left to the bottom right, connecting nodes receiving force load with the fixed nodes on the left side. We use the isosurface of the optimized density field and render the images in Houdini. We can also add a free boundary for this example in 3D space, as shown in Figure \ref{fig:cb_3d_f}. 1024 points are randomly initialized in the grid with initial $\bm D=700I$. Three-dimensional bubbles are formed and stretched after optimization, and there is no wasted material in the top right corner. \rv{Similarly, we compare the resulting in compliance for the optimized 3D cellular structures with and without free boundaries, as shown in Figure. \ref{fig:cb_f_comp_3d}. For similar reasons, the compliance of the structure with free boundaries is not that much different from that without free boundaries.
}

The statistics of our high-resolution examples are summarized as in Table \ref{table:statistics}. These examples are all run on a computer with an Intel i9 9980 CPU, 18 cores, 4GPU NVIDIA GeForce RTX2080, and 128G memory. The convergence criteria given in the table is $|c_i+c_{i-1}-c_{i-2}-c_{i-3}|/(c_{i-2}+c_{i-3})<0.1\%$, where $i$ is  the iteration number. \gv{The forward part includes the time cost of updating the density distribution based on the current variables of the Voronoi structure and the FEM solve for node displacement. The gradient calculation is taking the most time because our algorithm needs to iterate through each cell to calculate the derivative of compliance to each optimizable variable.}

\begin{table}[h]
\centering
 \begin{tabular}{p{0.2\linewidth} c c c c c c} 
 \toprule
  \textbf{Example} & \textbf{Resolution} &\multicolumn{3}{c}{\textbf{Computation Time (s)}} & \textbf{Convergence} & \gv{\textbf{Neighbor Number $k$}}\\
\cmidrule{3-5}
&& \gv{Forward} & Gradient & MMA\\ 
\midrule
 framed cantilever beam & $1024\times512$& 0.721 & 1.577 & 0.006 & 70 & \gv{10}\\
 \midrule
 odonata wing &$1024\times256$& 0.511 & 1.429 &0.004 & 17 & \gv{15}\\
  \midrule
 femur bone &$1024\times1024$& 1.164 & 4.915 &0.001 & 49 & \gv{15}\\
  \midrule
 3D cantilever beam  &$256\times128\times128$& 2.857 & 20.18 &0.004 &39 & \gv{15}\\
  \midrule
 3D cantilever beam with free boundary &$256\times128\times128$& 4.65 & 80.86 & 0.021 & 59 & \gv{15}\\  
 \bottomrule
 \end{tabular}
 \caption{Statistics of high-resolution examples.}
 \label{table:statistics}
\end{table}

\subsection{Discussion}
\rv{
Compared with the standard SIMP method, our topology optimization algorithm with differentiable Voronoi diagrams manifest its computational merits in several aspects. First, the algorithm encodes a strong geometric prior into the optimizer to coerce the emergence of cellular structures, providing a novel design space that was impractical to explore using a conventional density-based method. Second, we propose a free-boundary treatment to overcome the inherent limitation of Voronoi as a domain partition structure. This implicit, free-boundary representation extends our algorithm from designing the interior part only to handling the optimization of both the boundary shape and the internal structures. Third, the implicit, differentiable nature of our Voronoi diagram provides a practical way to calculate the design sensitivities. Last, regarding the computational performance, the locality of our hybrid grid-particle representation avoids a global search for point neighbors, facilitating large-scale optimizations on a high-resolution FEM grid. 
}

\rv{
Our approach also manifest several weaknesses compared with the standard SIMP method. First, as shown in the objective curves (e.g., Figure~\ref{fig:cb_comp_plot}), our approach under-performs in terms of its compliance performance when comparing with SIMP. Because of the additional Voronoi priors employed on top of the density representation, the results tend to exhibit an ensemble of cellular features of thin membranes and small cells. These features were not favored by a SIMP optimizer, which, in contrast, tend to generate thick beams and large volumes. Second, the free-boundary representation limits the search of structures with optimal compliance too. As shown in Figure~\ref{fig:cb_f_comp_2d} and Figure~\ref{fig:cb_f_comp_3d}, the results with a free-boundary treatment, despite its removal of wasted features in the top-right corner, did not exceed the one without a free boundary (and therefore with wasted features). Third, the computational cost of the implicit Voronoi, in particular the sensitivity $d\rho/d \bm x$, though accelerated by its local search strategy, is highly dependent on the number of neighboring points within the range. A large number of neighboring points, which is necessary to accommodate highly anisotropic cells in some examples (e.g., Figures~\ref{fig:bone} and \ref{fig:cb_3d}), leads to a quadratic increase of the computation time for the sensitivity calculation. Devising neighbor-pruning strategies to reduce the soft-max sum calculation for redundant neighbors (i.e., the points not contributing to a cell's density), or devising anisotropic Kd-tree search algorithms, could be an interesting direction to explore. 
}

\section{Conclusion}
\label{sec:discussion}
We proposed a topology optimization algorithm for generalized cellular structures. A Voronoi representation that is differentiable, generalized, and encompassing a hybrid discretization is devised to construct an efficient gradient-based optimization framework. The numerical examples demonstrate that the proposed method is effective in generating organic cellular structures resembling those in nature that also minimize structural compliance. Compared to the previous literature in cellular topology optimization, our method enabled, for the first time, a differentiable representation of the previously discrete Voronoi representation and successfully incorporated this representation in a density-based standard topology optimization pipeline. Compared to the previous hybrid Eulerian-Lagrangian topology optimization methods such as \cite{guo2014doing,norato2015geometry,li2021lagrangian}, which relied on an explicit or smoothed Lagrangian representation that can move in an Eulerian domain, our approach encodes the optimization variables into an implicitly defined distance metric tensor field carried on moving particles to express complex cellular structures. The differentiable nature of our geometric representation, in conjunction with its generalized Voronoi cells based on an optimizable, non-Euclidean distance field, allows the optimization of a broad range of organic cellular structures based on first principles.  

Still, there are some limitations to our method. First, due to the definition of our differentiable Voronoi representation, the edge formed by two control points is thinner in the middle and thicker farther away from the two points. We anticipate solving this problem by exploring different forms of metric tensors to obtain a controllable and consistent edge thickness over space. 
%Second, due to the definition of the Voronoi diagram without a free boundary, the edges span the entire domain thus causing wasted materials in the corner as demonstrated in the cantilever beam example. This problem can be addressed using our free boundary formulation with the price of increased FEM computation cost for the starting iterations. 
Second, the FEM discretization in our method relies on a high-resolution grid on the background, which requires a fine grid discretization to represent a complex structure with many cells. Exploring more efficient Eulerian-Lagrangian representations, in particular, those with sub-cell discretizations, would further reduce the computational cost of the entire optimization pipeline.
Third, our current approach handles the free boundary with a fixed $\epsilon_s$ value, implying a constant and non-optimizable radius for the foamy boundary. Incorporating $\epsilon_s$ into the optimization framework can further enhance the expressiveness of the current model. 
\rv{Fourth, we optimize the wing-shaped structures only in two-dimensional spaces to demonstrate the expressiveness of our proposed differentiable cellular structure. To make a comparison with the Voronoi structures of insect wings in nature, we plan to use our method on shell finite elements.}
We also consider future work to investigate multi-level cellular structures, featuring cellular structures at multiple length scales with a parent-child relationship, and extend the Voronoi structural representation to other structural optimization problems with different objectives and physical constraints, motivated by the nature of various insect wings that are driven by different biological or mechanical models.

\section*{acknowledgements} 
This project was supported by Toyota Central R\&D Labs. The Dartmouth authors also acknowledge NSF-1919647, 2106733, 2144806, and 2153560. We credit the Houdini education license for producing the video animations.
\bibliography{refs}

\end{document}